\newcommand{\ignore}[1]{}
\newcommand{\pcode}[2][\codesize]{
    \fbox{
    \begin{minipage}{0.45\linewidth}
    #1
    \begin{tabbing}
    xx \= xx \= xx \= xx \= xx \= xx \= xx \= xx \= xx \= xx \= \kill
    #2
    \end{tabbing}
    \end{minipage}
    }
  }
\title{Tree dimension in verification of constrained Horn clauses}
\author[Kafle, Gallagher and Ganty]
{Bishoksan Kafle\\
The University of Melbourne\\
E-mail: bishoksank@unimelb.edu.au\\
\and
John P. Gallagher\\
Roskilde University and IMDEA Software Institute\\
E-mail: jpg@ruc.dk\\
\and 
Pierre Ganty\\
IMDEA Software Institute\\
Email: pierre.ganty@imdea.org
}
\newcommand{\rahft}{{\sf \textsc{Rahft}}}
\newcommand{\false}{\mathtt{false}}
\newcommand{\falseit}{\mathit{false}}
\newcommand{\trueit}{\mathit{true}}
\newcommand{\lfp}{{\sf lfp}}
\newcommand{\abst}{\mathsf{abstract}}
\newcommand{\vars}{\mathsf{vars}}
\newcommand{\renameunfold}{\mathsf{pe\_cls}}
\newcommand{\rep}{\mathsf{rep}}
\newcommand{\constr}{\mathsf{constr}}
\newcommand{\SAT}{\mathsf{SAT}}
\newcommand{\unfold}{{\sf pe\_step}}
\newcommand{\AND}{{\sf AND}}
\newcommand{\Safe}{{\sf Safe}}
\newcommand{\Subst}{{\sf subst}}
\newcommand{\safe}{{\sf safe}}
\newcommand{\unsafe}{{\sf unsafe}}
\newcommand{\stunknown}{{\sf unknown}}
\newcommand{\solution}{{\sf solution}}
\newcommand{\counterexample}{{\sf counterexample}}
\newcommand{\status}{{\sf status}}
\newcommand{\atmost}[1]{\le\!#1}
\newcommand{\exactly}[1]{=\! #1}
\newcommand{\exceeds}[1]{>\! #1}
\newcommand{\anydim}[1]{\ge\! 0}
\def\anno#1{{\ooalign{\hfil\raise.07ex\hbox{\small{\rm #1}}\hfil%
        \crcr\mathhexbox20D}}}
\newcommand{\tuplevar}[1]{\mathbf{#1}}
\newcommand{\theory}{\mathbb{T}}
\newtheorem{definition}{Definition}
\newtheorem{example}{Example}
\newtheorem{proposition}{Proposition}
\begin{document}
\maketitle

\begin{abstract}
In this paper we show how the notion of tree dimension can be used in the verification of constrained Horn clauses (CHCs). The dimension of a tree is a numerical measure of its branching complexity and the concept here applies to Horn clause derivation trees. Derivation trees of dimension zero correspond to derivations using linear CHCs, while trees of higher dimension arise from derivations using non-linear CHCs.   We show how to instrument CHCs predicates with an extra argument for the dimension, allowing a CHC verifier to reason about bounds on the dimension of derivations. Given a set of CHCs $P$, we define a transformation of $P$ yielding a \emph{dimension bounded} set of CHCs $P^{\atmost{k}}$.  The set of derivations for $P^{\atmost{k}}$ consists of the 
derivations for $P$ that have dimension at most $k$.  We also show how to construct a set of clauses denoted $P^{\exceeds{k}}$ whose derivations have dimension exceeding $k$.  We then present algorithms using these constructions to decompose a CHC verification problem.  
One variation of this decomposition considers derivations of successively increasing dimension.  
The paper includes descriptions of implementations and experimental results.
\end{abstract}

%
% start input /Users/jpg/Research/Papers/bounded-dimension-tplp-revised/intro.tex
%
\section{Introduction}
The dimension of a tree, also known as the Horton-Strahler number of a tree\footnote{https://en.wikipedia.org/wiki/Strahler\_number} is a numerical measure of a tree's branching complexity. The concept was originally applied to analyse flows in rivers and their tributaries and to other naturally occurring tree structures \cite{DBLP:conf/lata/EsparzaLS14}. Recently it has found several applications in program analysis and verification \cite{DBLP:journals/jacm/EsparzaKL10,DBLP:conf/popl/RepsTP16}.  In this paper we apply the notion of tree dimension in the verification of CHCs, where the trees whose dimension we consider are derivation trees.  Derivation trees of dimension zero correspond to derivations using linear CHCs, while trees of higher dimension arise from derivations using non-linear  CHCs. 

The verification of a property of a set of CHCs often involves implicitly the set of all derivation trees for that set.  For example, a safety property is typically formalised as the consistency of a set of clauses, which amounts to establishing the absence of a derivation of a contradiction and requires the consideration of all derivations for the given clauses. CHCs provide a convenient representation for the statement of invariant properties of various systems including imperative programs \cite{DBLP:conf/pldi/GrebenshchikovLPR12}, which is again usually formalised as the absence of a derivation of some statement representing the violation of the invariant. An automated tool for finding such derivations might benefit from a divide-and-conquer strategy, decomposing the set of all derivations into smaller more manageable sets.

Tree dimension provides one such approach to decompose verification problems that involve the set of all derivations. Given a set of CHCs $P$ and a dimension $k\ge 0$, we define a transformation yielding a set of CHCs $P^{\atmost{k}}$ whose derivations have dimension of at most $k$. We can also obtain the complementary set of clauses (called $P^{\exceeds{k}}$) whose derivation trees have dimension at least $k+1$. Each such set of clauses ($P^{\atmost{k}}$ and $P^{\exceeds{k}}$) represents an under-approximation of the original set $P$ in the sense that they give rise to a subset of $P$'s derivations. 

Why might decomposition by dimension be useful? Firstly, the overall verification problem is reduced into simpler, but still non-trivial parts, each with an infinite number of derivations. By contrast, if one of the parts were finite, say the set of derivations of bounded depth, then the complementary part would arguably be no simpler than the original. Secondly, the particular properties of bounded dimension can be exploited. Any dimension-bounded set of clauses $P^{\atmost{k}}$ can be linearised, while preserving key semantic properties including consistency \cite{DBLP:journals/corr/KafleGG16}. This allows the use of tools designed and optimised for linear clauses. 

We also show how to reason directly about the dimension of derivations using any CHC verification system, by instrumenting the clauses, adding an extra argument to each predicate representing the dimension.

In Section \ref{prelim} we introduce the technical background of the paper. We review the notion of tree dimension and introduce the syntax and semantics of CHCs. We relate the concept of \emph{tree dimension} to CHCs derived from imperative programs in  Section \ref{ch:dimension_bounded_chc}; and present  a method for instrumenting CHCs predicates with an extra argument for the dimension and verify dimension related properties using the standard CHCs solvers. In Section \ref{sec:pe} we present partial evaluation algorithms to construct two versions of dimension-bounded clauses constructed from a given set of CHCs: one whose derivations are bounded in dimension from above and one whose derivations are bounded from below. 
The dimension-bounded sets of clauses are exploited by verification algorithms presented in Section \ref{sec:verification_algorithm}.
Section \ref{experiments} contains a description of a prototype implementation and discusses the results obtained. Section \ref{rel} presents a discussion of related work as well as the role of dimension in using CHCs for safety verification of imperative programs. Finally, Section \ref{concl} concludes.
 
%
%
%
%
%
%
%
%
%
%
%
%
%
%
%
%
%
%
%
%

 % end input /Users/jpg/Research/Papers/bounded-dimension-tplp-revised/intro.tex
 %
% start input /Users/jpg/Research/Papers/bounded-dimension-tplp-revised/tree_dimension.tex
%
\section{Preliminaries and formal background}\label{prelim}

\label{sec:tree_dimension}

A labelled tree $c(t_1,\ldots,t_k)$ ($k \ge 0$) is a tree whose nodes are labelled by identifiers, where $c$ is the label of the root and $t_1,\ldots,t_k$ are labelled trees, the children of the root.  
In this paper, all trees we consider are finite.

The dimension of a tree is a measure of its non-linearity; for example a linear tree (whose nodes have at most one child) has dimension zero while a complete binary tree has dimension equal to its height.  Formally, the dimension of a tree is defined as follows.

\begin{definition}[Tree dimension adapted from \citeN{DBLP:conf/stacs/EsparzaKL07}]\label{treedim}
  Given a labelled tree $t=c(t_1,\ldots,t_k)$, the tree dimension of $t$ represented as \(\mathit{dim}(t)\) is defined as follows:
  \[
  \mathit{dim}(c(t_1,\ldots,t_k))= \begin{cases}
  0 & \text{if } k=0 \\ 
  \mathit{dim}(t_i)&\text{if } k>0 \land \vert\{ i \mid \forall j\colon \mathit{dim}(t_j)\leq\mathit{dim}(t_i) \}\vert = 1 \\
\mathit{dim}(t_i) +1 &\text{if } k>0 \land \vert\{ i \mid \forall j\colon \mathit{dim}(t_j)\leq\mathit{dim}(t_i) \}\vert > 1 
\end{cases}
  \]
\end{definition}

 Figure~\ref{fig:treedimension} shows a  labelled tree $t=c_3(c_2(c_2(c_1,c_1), c_1))$ (each $c_i$ is a node label) in graphical form and the dimension of each of its subtrees. The dimension of the root node (1 in this case) is the dimension of the tree. 

\begin{figure}[h!]
  \centering

    \includegraphics[width=0.80 \textwidth]{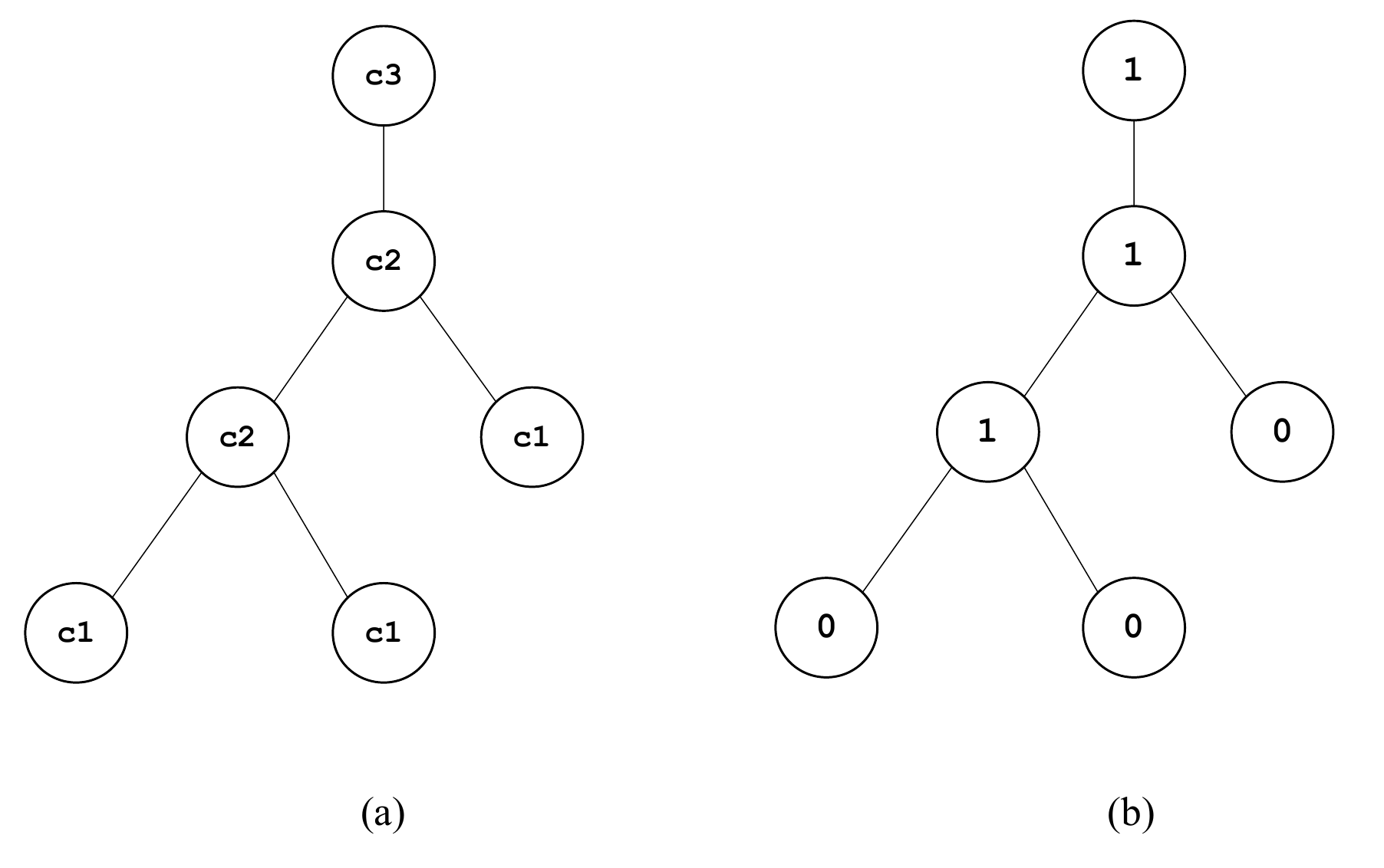}

    \caption{(a) a labelled tree $c_3(c_2(c_2(c_1,c_1), c_1))$ and (b) the dimension of each subtree.}
    \label{fig:treedimension}
\end{figure}
 % end input /Users/jpg/Research/Papers/bounded-dimension-tplp-revised/tree_dimension.tex
 %
% start input /Users/jpg/Research/Papers/bounded-dimension-tplp-revised/chc.tex
%
%
\label{sec:chcs}

\begin{figure}[t]
  \begin{center}
  \begin{tabular}{ll}
    \pcode[\small]{

 $F_n=
\begin{cases}
n \mathrm{~~~if~ }n=0 ~\mathrm{or}~ 1 \\
F_{n-1} + F_{n-2} \mathrm{~~if~ }n>1
\end{cases}
$
} &
\begin{lstlisting}[linewidth=6.3cm]
c1. fib(A,B):- A>=0, A=<1, B=A.
c2. fib(A,B):- A>1, A2=A-2,  
            A1=A-1, fib(A2,B2),   
           fib(A1,B1), B=B1+B2. 
c3. false:- A>5, fib(A,B), B<A.    
\end{lstlisting} 
\end{tabular}
  \end{center}

\caption{Fibonacci function (left), its encoding as CHCs  and a property \textit{Fib} (right).}
 \label{exprogram}
\end{figure}

A constrained Horn clause (CHC) is a first-order predicate logic formula of the form $\forall \tuplevar{x_0} \ldots \tuplevar{x_k} 
(p_1(\tuplevar{x_1}) \wedge \ldots \wedge p_k(\tuplevar{x_k}) \wedge \phi \rightarrow
p_0(\tuplevar{x_0}))$, where $\phi$ is a finite conjunction of \emph{constraints} with
respect to some constraint theory, $\tuplevar{x_0},\ldots, \tuplevar{x_k}$ are (possibly empty) tuples of \emph{variables}, $p_0,\ldots,p_k$ are \emph{predicate symbols}, $p_0(\tuplevar{x_0})$ is the \emph{head} of the clause and $p_1(\tuplevar{x_1}) \wedge \ldots \wedge p_k(\tuplevar{x_k}) \wedge \phi$ is the \emph{body}.
Following the conventions of Constraint Logic Programming (CLP), such a clause is written as $p_0(\tuplevar{x_0}) \leftarrow \phi , p_1(\tuplevar{x_1}) , \ldots , p_k(\tuplevar{x_k})$. 
An atomic formula, or simply \emph{atom}, is a formula $p(\tuplevar{x})$ where $p$ is a
predicate symbol and $\tuplevar{x}$ a tuple of arguments.  Atoms are sometimes written as $A$, $B$ or $H$, possibly with sub- or superscripts.

A clause is called \emph{non-linear} if it contains more than one atom in the body, otherwise it is called \emph{linear}.  A set of CHCs $P$ is called  linear if $P$ only contains linear clauses, otherwise it is called non-linear.  \emph{Integrity constraints} are a special kind of clause whose head is the predicate  $\mathtt{false}$. A set of constrained Horn clauses can also be regarded as a \emph{constraint logic program}, though in  this paper CHCs are not regarded as executable programs; we are concerned with verifying logical properties of CHCs.

For concrete examples of CHCs we use Prolog syntax and typewriter font, writing the implication $\leftarrow$ as \texttt{:-} and using capital letters for variable names.  The constraints can also be intermixed with the body atoms.
Figure \ref{exprogram} (right) contains an example of a set of constrained Horn clauses, called \textit{Fib}, which encodes the Fibonacci function.
The first two clauses \texttt{c1} and \texttt{c2} define the Fibonacci function and clause \texttt{c3} represents a property of the Fibonacci function expressed as an integrity constraint.  \texttt{c2} is a non-linear clause while  \texttt{c1} and \texttt{c3} are linear.
Each CHC in a given set of CHCs is associated with an identifier, as illustrated in Figure \ref{exprogram}.

\paragraph{CHC semantics.} The semantics of CHCs is obtained using standard concepts from predicate logic semantics.  
An \emph{interpretation} assigns  to each predicate a relation over the domain of the constraint theory $\theory$, whereas constraints have interpretations in the theory itself. 
In particular, the predicate $\false$ is always interpreted as $\falseit$.
An interpretation \emph{satisfies} a set of formulas if each formula in the set evaluates to $\trueit$ in the interpretation in the standard way.
In particular, \emph{a model} of a set of CHCs is an interpretation in which each clause evaluates to $\trueit$. 
A set of CHCs $P$ is \emph{consistent} if and only if it has a model. 
Otherwise it is \emph{inconsistent}. 

In the algorithms developed in Section \ref{sec:verification_algorithm}, we consider only interpretations representable within the constraint theory by a set of \emph{constrained facts} of the form $p(\tuplevar{x}) \leftarrow \phi$ where $\tuplevar{x}$ is a tuple of distinct variables and $\phi$ a  constraint  (free variables of $\phi$ are subset of $\tuplevar{x}$)  in the constraint theory underlying the CHCs.
There is exactly one constrained fact for each predicate \(p\)  in the set of CHCs.
Such a constrained fact defines the interpretation of $p$ as the relation \(\{\tuplevar{x}\theta \mid \tuplevar{x}\theta \text{ is ground, and }\phi\theta \text{ holds in } \theory \} \).
We call such a set of constrained facts a \emph{syntactic interpretation}, and if it is a model, we call it a \emph{syntactic model}.
If a set of CHCs has a syntactic model, then it has a model, but the reverse is not necessarily true.
In particular, a syntactic interpretation satisfies a clause $A_0 \leftarrow   \phi, A_1, \ldots, A_k$ if for constrained facts (with variables suitably renamed) $A_0 \leftarrow \phi_0$, $A_1 \leftarrow \phi_1$, $\ldots$, $A_k \leftarrow \phi_k$ in the interpretation, the formula $\phi \wedge \phi_1 \wedge \ldots \wedge \phi_k \rightarrow \phi_0$ holds in the underlying constraint theory.

In some works e.g. \cite{DBLP:conf/sas/BjornerMR13,McmillanR2013}  a syntactic model is also called a \emph{solution} and we use these terms interchangeably in this paper when the context is clear.  
When modelling safety properties of systems using CHCs, the consistency of a set of CHCs corresponds to \emph{safety} of the system.   
Thus we also refer to CHCs as being \emph{safe} or \emph{unsafe} when they are consistent or inconsistent respectively.

\paragraph{AND-trees and trace trees.} Derivations for CHCs are represented by AND-trees.
The following definitions of derivations and trace trees are adapted from 
\citeN{Gallagher-Lafave-Dagstuhl}. From now on, we assume that each clause has a unique identifier.

\begin{definition}[AND-tree or derivation tree]\label{def:andtree}
An \emph{AND-tree} for a set of CHCs is a tree each of whose nodes is labelled by an atom, a constraint and a clause identifier such that
\begin{enumerate}
\item
each non-leaf node corresponds to a clause (with variables suitably renamed)
$A \leftarrow \phi, A_1,\ldots,A_k$ and is labelled by an atom $A$, constraint $\phi$ and has children labelled by atoms $A_1,\ldots,A_k$;
\item
each leaf node corresponds to a clause $A \leftarrow \phi$ (with variables suitably renamed) and is labelled by
an atom $A$ and constraint $\phi$;
\item each node is labelled with the clause identifier of the clause corresponding to the node.
\end{enumerate}
The phrase “with variables suitably renamed” here and elsewhere in the paper means that variables occurring in the body but not in the head do not occur in the labels of any ancestor node. An example of an AND-tree is shown in Figure \ref{fig:tracetree} (right).
\end{definition}

\begin{figure}[ht!]
  \centering
    \includegraphics[width= \textwidth]{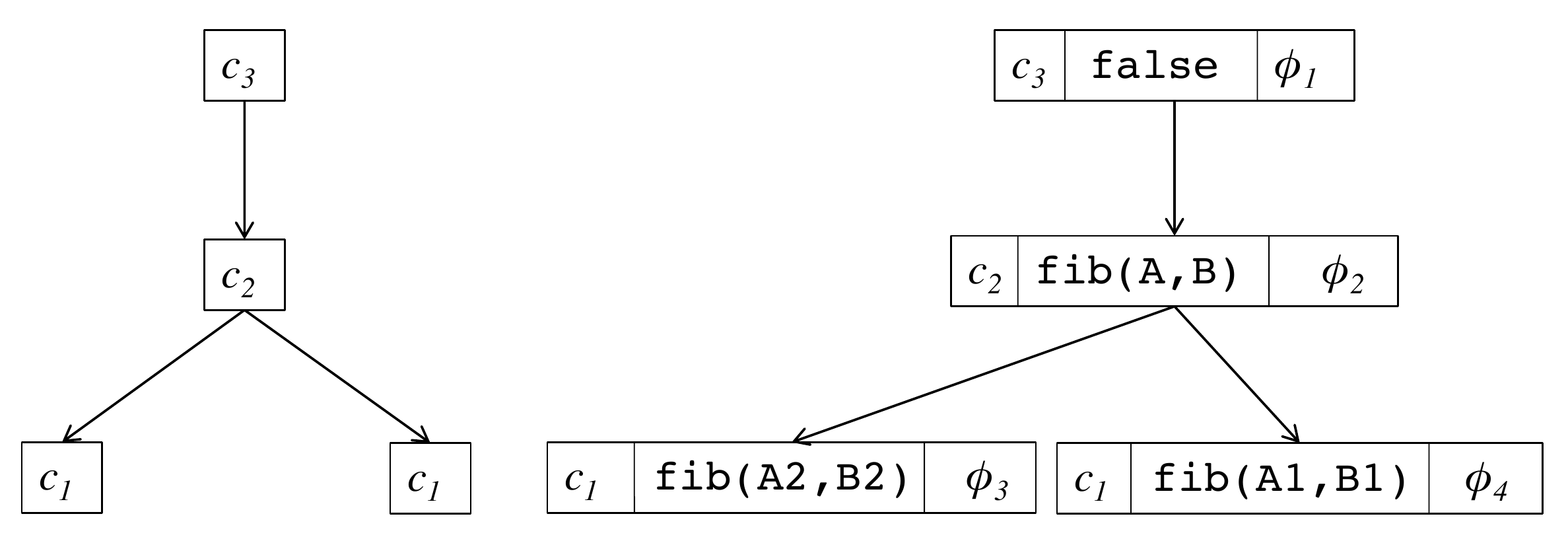}
    \caption{A trace-term $c_3(c_2(c_1,c_1))$ of Fib  (left) and its $\AND$-tree (right), where $\phi_1 \equiv \mathtt{A>5} \wedge \mathtt{B<A}; ~\phi_2 \equiv  \mathtt{A>1} \wedge \mathtt{A2=A-2} \wedge  \mathtt{A1=A-1} \wedge  \mathtt{B=B1+B2};  ~\phi_3 \equiv \mathtt{A2\geq\ 0} \wedge  
\mathtt{A2\leq\ 1} \wedge \mathtt{B2=A2}; ~\phi_4 \equiv \mathtt{A1\geq\ 0} \wedge \mathtt{A1\leq\ 1} \wedge \mathtt{B1=A1}$.}
   \label{fig:tracetree}
\end{figure}

\begin{definition}[$\constr(t)$]\label{constr}
Given an AND-tree $t$, the conjunction of the constraints  in its node labels is represented by $\constr(t)$. $t$  is \emph{feasible} or \emph{successful} if and only if $\constr(t)$  is satisfiable under $\theory$. 
\end{definition}
\begin{definition}\label{provable}
For an atom $p(\tuplevar{x})$ and a set of CHCs \(P\) we write $P \vdash p(\tuplevar{x})$ if there exists a feasible AND-tree with root labelled by $p(\tuplevar{x})$. 
\end{definition}

\begin{definition}\label{counterexample}
A feasible AND-tree with root node labelled by $\mathtt{false}$ is called a \emph{counterexample}.
\end{definition}
The soundness and completeness of derivation trees \cite{JMMS} implies that $P$ is inconsistent if and only if $P \vdash \mathtt{false}$, that is, $P$ has a counterexample. AND-trees in this paper, unless otherwise stated, are counterexamples.

An AND-tree $t$ can be associated with a more abstract structure called a \emph{trace tree}, which is the result of removing all node labels from $t$ apart from the clause identifiers. 
The identifiers can be treated as constructors whose arity is the number of atoms in the clause body of the clause associated with the identifier.  
In this way we can write trace trees as terms (as in Figure \ref{fig:treedimension}(a)).

Thus a trace tree, together with a mapping from clause identifiers to clauses, uniquely defines an AND-tree (up to renaming of variables). Namely, $c(t_1,\ldots,t_k)$ corresponds to the AND-tree whose root is labelled by the atom $A$ and the clause $A \leftarrow   \phi, A_1, \ldots, A_k$ whose identifier is $c$,  and whose children are the AND-trees corresponding to $t_1,\ldots,t_k$ respectively. 

\begin{definition}[Dimension of a CHC derivation]\label{def:deriv-dim}
The dimension of a derivation for a set of CHCs is the tree dimension of the AND-tree (or associated trace tree) for the derivation.
\end{definition}

It is clear from these definitions that the dimension of derivations is closely related to the syntactic structure of CHCs.  For instance, a set of linear clauses can give rise only to derivations of dimension zero, since the corresponding trace trees are linear.

 % end input /Users/jpg/Research/Papers/bounded-dimension-tplp-revised/chc.tex

%
% start input /Users/jpg/Research/Papers/bounded-dimension-tplp-revised/dimension_bounded_chc.tex
%
\section{Tree dimension and CHCs}
\label{ch:dimension_bounded_chc}

\subsection{Programs as CHCs and their dimension}
In this subsection we discuss the notion of tree dimension in relation to CHCs representing imperative programs. CHCs provide a suitable language for expressing the semantics of imperative languages \cite{Peralta-Gallagher-Saglam-SAS98,DBLP:conf/birthday/BjornerGMR15,DBLP:conf/tacas/GrebenshchikovGLPR12}, enabling the use of CHC tools for verification of properties of imperative programs. 
The clauses resulting from the translation may give rise to derivations of different dimension, depending on the style of semantic specification underlying the translation. For example, procedures call can be encoded as linear (consider inline) or non-linear CHCs  giving rise to different dimensions.

\paragraph{Imperative programs without procedures.} Consider first a language with no procedures. Let $S$  be an imperative statement such as an assignment, conditional or loop and let a configuration $\langle S, \sigma \rangle$ stand for statement $S$ executing in state $\sigma$. In \emph{structural operational semantics} \cite{DBLP:books/daglib/0067731} (sometimes called small-step semantics), the meaning of statements is expressed by transitions of the form $\langle S, \sigma \rangle \Rightarrow \langle S', \sigma' \rangle$, which means that executing $S$ in state $\sigma$ yields (in one execution step) the configuration $\langle S', \sigma' \rangle$.  A translation based on small-step semantics then yields a corresponding linear clause $p_S(\sigma) \leftarrow \phi(\sigma,\sigma'), p_{S'}(\sigma')$, 
where $p_S$ and $p_{S'}$ are predicates corresponding to statements $S$ and $S'$ respectively, and $\phi(\sigma,\sigma')$ is a constraint relating the variables in states $\sigma$ and $\sigma'$. (Alternatively, we could choose $p_{S'}(\sigma') \leftarrow \phi(\sigma,\sigma'), p_S(\sigma)$, reversing the direction of the transition, depending on the purpose of the encoding).

By contrast, in \emph{natural semantics} (sometimes called big-step semantics), the meaning of a statement $S$ is expressed by a transition $\langle S, \sigma \rangle \Rightarrow \sigma'$, where this means that the execution of statement $S$ in state $\sigma$ terminates with final state $\sigma'$.  A translation based on big-step semantics yields clauses that break down such a “big step” into smaller steps, using the syntactic structure of the statement.

The difference between the two styles can be clearly seen for the translation of a statement sequence $S_1;S_2$. The small-step semantics would yield linear clauses of the following form, in which the computation of $S_1$ is carried out step by step until $S_1$ terminates, and then $S_2$ is executed. 
\[
\begin{array}{l}
p_{S_1;S_2}(\sigma) \leftarrow \phi_1, p_{S_1';S_2}(\sigma'). \\
\ldots\\
p_{S_1'';S_2}(\sigma) \leftarrow \phi_2, p_{S_2}(\sigma'). \\
p_{S_2}(\sigma) \leftarrow \phi_3, p_{S_2'}(\sigma'). \\
\ldots\\
\end{array}
\]

The clauses resulting from small-step semantics closely correspond to the control-flow graph of the statement, where each clause corresponds to an edge in the graph.

The big-step semantics of $S_1;S_2$ yields a clause of the form:
\[
\begin{array}{l}
p_{S_1;S_2}(\sigma, \sigma'') \leftarrow p_{S_1}(\sigma, \sigma'),p_{S_2}(\sigma',\sigma''). \\
p_{S_1}(\sigma, \sigma') \leftarrow \ldots.\\
p_{S_2}(\sigma', \sigma'') \leftarrow \ldots.\\
\ldots\\
\end{array}
\]

Here the first clause is non-linear, chaining the two big steps corresponding to the execution of $S_1$ and $S_2$ together to make one big step for $S_1;S_2$.

A translation from imperative code to CHCs may mix big- and small-step styles.   
In both styles, a loop results in a recursive predicate 
(that is, one that calls itself directly or indirectly).  
Regarding the dimension of derivations in the two styles, however, it is clear that small-step semantics yields linear clauses and hence zero-dimensional derivations, that is, all derivation trees will be linear.  
Big-step semantics, on the other hand, yields non-linear clauses. However, although the clauses contain recursive predicates for the loops, it can be shown that derivations using the non-linear clauses derived from big-step semantics have bounded dimension, with the bound determined by the level of statement nesting.
Since clauses whose derivations are of bounded dimension can be linearised \cite{DBLP:journals/corr/KafleGG16},  these non-linear clauses can be transformed to linear clauses. It may be asked whether the result is the same as the clauses resulting from the small-step-based translation.  The answer is “not exactly”. While the linearised clauses resulting from big-step semantics would correspond to the same small execution steps, there are more arguments of the predicates than in the clauses resulting from small-step semantics, representing the intermediate states that are created in the clause bodies resulting from big-step semantics.
\begin{example}\label{linearising big-step}
Given the program $P: \ \   \mathtt{x=1; y=2;}$ the small-step encoding gives:
\begin{lstlisting}
s(X,Y):- X1=1, Y1=Y, s1(X1,Y1).
s1(X,Y):- X1=X, Y1=2, s2(X1,Y1).
s2(X,Y):- true.
\end{lstlisting}
The big-step encoding gives:
\begin{lstlisting}
b(X0,Y0,X2,Y2):- b1(X0,Y0,X1,Y1), b2(X1,Y1,X2,Y2).
b1(X0,Y0,X1,Y1):- X1=1, Y1=Y0.
b2(X1,Y1,X2,Y2):- X2=X1, Y2=2.
\end{lstlisting}
This can be straightforwardly linearised to the following, where each predicate represents the remaining computation.  
\begin{lstlisting}
p(X0,Y0,X2,Y2):- p1(X0,Y0,X1,Y1,X2,Y2).
p1(X0,Y0,X1,Y1,X2,Y2):- X1=1, Y1=Y0, p2(X1,Y1,X2,Y2).
p2(X1,Y1,X2,Y2):- X2=X1, Y2=2.
\end{lstlisting}
This is similar to the small-step encoding, but contains more arguments, partly due to the fact that the final state of the small-step encoding is not explicitly returned, but it is returned in the big-step encoding, and partly due to the variables representing intermediate states (for example in the predicate $\mathtt{p1}$).
\end{example}

\paragraph{Imperative programs with procedures.} Turning to a language with procedures, the small-step semantics requires the state to include a stack, whose height is unbounded in the presence of recursive procedures. The \emph{call} and \emph{return} statements respectively push and pop the stack. Thus the clauses, though still linear, are interpreted over a richer domain than that of the program variables themselves. In the big-step semantics no explicit stack is needed; a procedure call is represented, as other statements, with a big-step predicate expressing the relation between the states before and after the call (in effect, the predicate is a procedure summary).  

As regards dimension, clauses resulting from big-step semantics of programs with recursive procedures can give rise to derivations of unbounded dimension due to the presence of  recursive procedures of the form \texttt{proc p() \{\ldots p();\ldots p();\ldots\}}, which yields a non-linear clause of this form.
\[
p(\sigma_0, \sigma_n) \leftarrow \ldots, p(\sigma_1, \sigma_2),\ldots,p(\sigma_3,\sigma_4),\ldots \\
\]
 We note that the clauses due to big-step semantics could still be linearised (in effect a transformation to continuation-passing form in which a stack is introduced) but this transformation is different from the linearisation of bounded-dimension clauses.

 In summary, CHCs representing single imperative procedures with no calls to external procedures are naturally linear, either by direct translation based on small-step semantics (or equivalently, control-flow graphs) or by translating to dimension-bounded clauses using big-step semantics and then linearising using techniques presented in  \citeN{DBLP:journals/corr/KafleGG16} and \citeN{DBLP:journals/tcs/AfratiGT03}.
On the other hand, imperative programs with procedure calls can be given a straightforward translation into CHCs using big-step semantics, but the dimension of derivations in the clauses is not in general bounded. The techniques described in this paper for decomposition based on dimension are hence mostly relevant for verification and analysis of imperative programs with recursive procedures.  Other techniques for obtaining linear clauses from such programs do so at the cost of introducing a stack as a predicate argument.

\subsection{Construction of dimension instrumented set of clauses}
\label{ch:dimension_instrumented}

In some sets of CHCs, the dimension of derivation trees is not bounded, but there is a bound on the dimension of \emph{feasible} derivations.
Figure \ref{mc91program} shows the well known 91-function of McCarthy\footnote{http://en.wikipedia.org/wiki/McCarthy\_91\_function} together with its  constrained Horn clauses representation.

\begin{figure}[t]
  \begin{center}
  \begin{tabular}{ll}
    \pcode[\small]{
 $M(n)=
\begin{cases}
n-10 \mathrm{~~~if~ } n>100 \\
M(M(n+11))  \mathrm{~if~}n\leq100
\end{cases}
$
} &
\begin{lstlisting}[linewidth=5.6cm]
mc91(N,X):- N>100, X=N-10.
mc91(N,X):- N=<100, Y=N+11, mc91(Y,Y2), mc91(Y2,X).   
\end{lstlisting} 
\end{tabular}
  \end{center}

\caption{McCarthy's 91-function and its encoding as CHCs.}
 \label{mc91program}
\end{figure}

Although it is possible to construct derivation trees of arbitrary dimension using the clauses in Figure \ref{mc91program}, the dependencies between the two recursive calls to \texttt{mc91} imply that no \emph{feasible} derivation tree for \texttt{mc91(N,X)} has dimension greater than 2. This is a meta-property of  the set of clauses; however,  as we now show, by instrumenting the clauses with dimensions, such properties can be expressed as safety properties of CHCs.

\begin{definition}[Dimension-instrumented clauses]\label{dim_instr}
Let $P$ be a set of CHCs.  The dimension instrumented set $P_{dim}$ of CHCs  is defined as follows.
\begin{itemize}
\item
For each predicate $p$ of arity $m$ define a predicate $p'$ of arity $m+1$.
\item
For each clause in $P$ of the form
$$p(\tuplevar{x}) \leftarrow \phi, p_1(\tuplevar{x_1}),\ldots,p_n(\tuplevar{x_n})$$
construct a clause 
$$p'(\tuplevar{x},k) \leftarrow \phi, p'_1(\tuplevar{x_1},k_1),\ldots,p'_n(\tuplevar{x_n},k_n), dim([k_1,\ldots,k_n],k)$$
in $P_{dim}$, where $k_1,\ldots,k_n,k$ are fresh variables added as the final argument for their respective predicates, and $dim([k_1,\ldots,k_n],k)$ is defined according to the rules in Definition \ref{treedim}  for determining the dimension $k$ of a tree from the dimensions $k_1,\ldots,k_n$ of the subtrees of the root node. 
\end{itemize}
\end{definition}

\begin{proposition}
Let $P$ be a set of CHCs and $P_{\mathit{dim}}$ be the set of clauses defined from $P$ using Definition \ref{dim_instr}. Then
$P_{\mathit{dim}} \vdash p(\tuplevar{t},k)$ if and only if the atom $p(\tuplevar{t})$ has a derivation of dimension $k$ in $P$.
\end{proposition}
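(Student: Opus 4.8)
The plan is to prove both directions by structural induction on feasible AND-trees, using the fact that the construction of Definition \ref{dim_instr} sets up a one-to-one correspondence between the clauses of $P$ and those of $P_{\mathit{dim}}$. Since each clause $p(\tuplevar{x}) \leftarrow \phi, p_1(\tuplevar{x_1}),\ldots,p_n(\tuplevar{x_n})$ of $P$ yields exactly one clause of $P_{\mathit{dim}}$ (with the same identifier), the trace trees over $P$ and over $P_{\mathit{dim}}$ coincide, and hence there is a bijection between AND-trees of $P$ and AND-trees of $P_{\mathit{dim}}$ obtained by erasing, respectively reinstating, the extra dimension arguments and the body literals $dim([k_1,\ldots,k_n],k)$. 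Here $p'(\tuplevar{t},k)$ denotes the instrumented counterpart of $p(\tuplevar{t})$ as in the definition.

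First I would make precise the semantic property of the instrumentation literal that drives the whole argument: for any fixed arity $n$ and any integers $k_1,\ldots,k_n$, the constraint $dim([k_1,\ldots,k_n],k)$ holds for a unique value of $k$, namely the dimension assigned by Definition \ref{treedim} to a node whose children have dimensions $k_1,\ldots,k_n$; in particular $dim([\,],k)$ forces $k=0$. This is immediate from the construction, since for each fixed $n$ the literal is built to encode the three cases of Definition \ref{treedim} as a (disjunctive) linear-arithmetic constraint, and it is therefore total and functional in $k$.

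For the direction ($\Leftarrow$), suppose $t$ is a feasible AND-tree for $p(\tuplevar{t})$ in $P$ of dimension $k$. I would construct, by induction on $t$, a feasible AND-tree $t'$ for $p'(\tuplevar{t},k)$ in $P_{\mathit{dim}}$, maintaining the invariant that the dimension argument at the root of each instrumented subtree equals the tree dimension of the corresponding subtree of $t$. At a leaf (a fact $p(\tuplevar{x}) \leftarrow \phi$) the instrumented clause contributes $dim([\,],k)$, forcing $k=0=\mathit{dim}$ of the leaf. At an internal node built from a clause with body atoms $p_1(\tuplevar{x_1}),\ldots,p_n(\tuplevar{x_n})$ and children $t_1,\ldots,t_n$ of dimensions $d_1,\ldots,d_n$, the induction hypothesis supplies instrumented children with root dimension arguments $d_1,\ldots,d_n$, and the extra constraint $dim([d_1,\ldots,d_n],k)$ fixes $k=\mathit{dim}(t)$ by Definition \ref{treedim}. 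Feasibility is preserved because the fresh variables $k_1,\ldots,k_n,k$ occur neither in $\phi$ nor in the argument tuples, so $\constr(t')$ is the conjunction of $\constr(t)$ with the (separate, always satisfiable) dimension constraints, and any model of $\constr(t)$ extends to one of $\constr(t')$ by assigning the dimension variables their forced values. The direction ($\Rightarrow$) is the mirror image: erasing the dimension arguments and the $dim$ constraints from a feasible tree $t'$ for $p'(\tuplevar{t},k)$ yields, by the clause correspondence, a genuine AND-tree $t$ for $p(\tuplevar{t})$ over $P$, which remains feasible since dropping conjuncts cannot destroy satisfiability of $\constr$; the same invariant, read bottom-up, shows $k=\mathit{dim}(t)$.

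The main obstacle I expect is the bookkeeping around feasibility and variable freshness rather than any deep difficulty: one must argue carefully that conjoining the $dim$ constraints neither removes nor creates satisfiability, which rests on the decoupling of the fresh dimension variables from the original constraint variables together with the totality and functionality of $dim$ established in the first step. A secondary point to verify is that the counting condition of Definition \ref{treedim} — whether the maximal dimension among the children is attained once or more than once — is faithfully captured by $dim$ for every fixed arity $n$, including the correct handling of repeated maximal values; once this is confirmed, the induction goes through uniformly.
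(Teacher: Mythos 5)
The paper states this proposition \emph{without any proof} --- it is asserted as an immediate consequence of Definition~\ref{dim_instr} and is followed directly by an example --- so there is no official argument to compare against; what can be judged is whether your reconstruction is sound, and it is. Your two ingredients are exactly what the construction is designed to deliver: (i) the clause-by-clause correspondence between $P$ and $P_{\mathit{dim}}$, which induces a bijection between derivations once the $dim$ literals and dimension arguments are erased or reinstated, and (ii) totality and functionality of $dim$, i.e., that for given $k_1,\ldots,k_n$ the atom $dim([k_1,\ldots,k_n],k)$ determines $k$ uniquely as the value prescribed by Definition~\ref{treedim}. The induction maintaining the invariant ``root dimension argument equals the tree dimension of the corresponding subtree,'' together with the observation that the fresh dimension variables are decoupled from the original constraint variables, correctly disposes of the feasibility bookkeeping in both directions.

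One small point is worth making explicit to align your proof with the paper's actual definition: in the paper, $dim$ is not a constraint but a predicate defined by auxiliary non-recursive clauses (Figure~\ref{fig:mc91instrumented}), which are only later unfolded into arithmetic constraints (Figure~\ref{fig:fibinstrumented-unfolded}). If $dim$ is kept as a predicate, a feasible AND-tree for $p'(\tuplevar{t},k)$ in $P_{\mathit{dim}}$ contains additional subtrees rooted at the $dim$ atoms, so your erasure must delete whole subtrees (not just body literals), and in the ($\Leftarrow$) direction you must graft onto each node a leaf AND-tree for its $dim$ atom built from the appropriate $dim$ clause, whose constraint then joins $\constr(t')$. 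Since these clauses are non-recursive and their constraints mention only dimension variables, this changes nothing of substance --- your decoupling and functionality arguments go through verbatim --- but stating it closes the gap between ``$dim$ as constraint'' and ``$dim$ as defined predicate.''
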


\begin{example}
Figure~\ref{fig:mc91instrumented} lists the dimension-instrumented version of the McCarthy 91-function.
\begin{figure}[ht!]
	\centering
\begin{lstlisting}
mc91(N,X,K):- N>100, X=N-10,  dim([],K). 
mc91(N,X,K):- N=<100, Y=N+11,
               mc91(Y,Y2,K1), mc91(Y2,X,K2), dim([K1,K2],K).
dim([],0).
dim([K1,K2], K3):- K1>=K2+1, K3=K1.
dim([K1,K2], K3):- K2>=K1+1, K3=K2.
dim([K1,K2], K3):- K1=K2, K3=K1+1.
\end{lstlisting}
\caption{Dimension instrumented CHCs for the McCarthy 91-function.}
\label{fig:mc91instrumented}
\end{figure}
\end{example}

\subsection{Verification of dimension properties}
\label{sec:proginstr}
Using the instrumented program we can try to prove information about the dimension, such as upper or lower bounds or other relationships between the dimension and other predicate arguments. 

\begin{example}
	To establish that successful derivations for the atom \texttt{mc91(X,Y)} have dimension at most 2 we add the integrity constraint \texttt{false:- mc91(N,X,K), K>2.} to the dimension-instrumented clauses of Fig~\ref{fig:mc91instrumented}.
The clauses together with the integrity constraint are given to an automatic solver for Horn clauses, e.g. \cite{DBLP:conf/tacas/GrebenshchikovGLPR12,DBLP:conf/cav/KafleGM16}, which is able to prove the safety of the clauses and thus establish the upper bound of 2.
\end{example}

In the next example, we show that the dimension can depend on the values of other predicate arguments.  
\begin{example}
The dimension-instrumented version of the \emph{Fib} clauses is shown in Figure \ref{inprogram}.  The property to be proved is that the dimension of the trees rooted at $\false$ of \emph{Fib} is less than or equal to the  half of \emph{Fib}'s  input value, expressed by the integrity constraint \texttt{false:- fib(A,B,K), 2K-1>=A.} 
Again, this property is established by applying a Horn clause solver to prove the safety of the clauses together with the integrity constraint.
\end{example}

 \begin{figure}
 \centering
\begin{lstlisting}
fib(A,A,K):- A>=0, A=<1, dim([],K). 
fib(A,B,K):- A>1, A2 =A-2, fib(A2,B2,K1),
       A1=A-1, fib(A1,B1,K2), B=B1+B2, dim([K1,K2],K).        
\end{lstlisting}
\caption{Dimension instrumented CHCs for the Fib program.}
 \label{inprogram}
\end{figure}

\begin{example}
We present the well known counting change example taken from \citeN[Chapter 1]{DBLP:books/mit/AbelsonS96}. Figure \ref{counting-change} shows its encoding in CHCs and the Figure \ref{counting-change-dim} shows the dimension-instrumented version of the clauses. The property of interest is to relate the number of different coins (counts) with the  dimension of the derivation of the predicate \texttt{cc}. We can establish that the dimension is at most the number of different coins as expressed by the integrity constraint \texttt{false :- B>=1, K>B, cc(A,B,C,K)}. 
\end{example}

 \begin{figure}
 \centering
\begin{lstlisting}
cc(0,Y,1):- Y>0.
cc(X,_,0):- X<0.
cc(_,Y,0):- Y=<0.
cc(X,Y,Z):- X>0, kinds_of_coins(Y,A), 
               X1=X-A, cc(X1,Y,Z1), 
               Y1=Y-1, cc(X,Y1,Z2), Z=Z1 +Z2.
               
kinds_of_coins(1,1).  kinds_of_coins(2,5).  kinds_of_coins(3,10).
kinds_of_coins(4,25). kinds_of_coins(5,50).
        
\end{lstlisting}
\caption{Counting change example encoded as a set of CHCs.}
 \label{counting-change}
\end{figure}

 \begin{figure}[h!]
 \centering
\begin{lstlisting}
cc(0,Y,1,K):- Y>0, dim([],K).
cc(X,_,0,K):- X<0, dim([],K).
cc(_,Y,0,K):- Y=<0, dim([],K).
cc(X,Y,Z,K):- X>0, kinds_of_coins(Y,A,K0), X1=X-A, 
                  cc(X1,Y,Z1,K1), Y1=Y-1, cc(X,Y1,Z2,K2), 
                  Z=Z1+Z2, dim([K0,K1,K2],K). 
                  
kinds_of_coins(1,1,K):- dim([],K).
kinds_of_coins(2,5,K):- dim([],K).
kinds_of_coins(3,10,K):- dim([],K).
kinds_of_coins(4,25,K):- dim([],K).
kinds_of_coins(5,50,K):- dim([],K).
\end{lstlisting}
\caption{Dimension instrumented CHCs for the Counting change example.}
 \label{counting-change-dim}
\end{figure}

In general, verifying whether all the feasible derivation trees of a predicate in the program has a certain dimension is as challenging as proving any other non-trivial properties of the program. But in some cases the knowledge of  dimension of derivation trees of a program is useful for verifying other program properties.  For instance, using the knowledge that the derivation trees of McCarthy 91-function have dimension at most 2 would allow us to restrict the verification of any program property relating to successful derivations to the derivations in the dimension-bounded program $P^{\atmost{2}}$ (see Section \ref{pe-atmost}) where $P$ is the set of clauses for the McCarthy 91-function.

 % end input /Users/jpg/Research/Papers/bounded-dimension-tplp-revised/dimension_bounded_chc.tex
 
%
% start input /Users/jpg/Research/Papers/bounded-dimension-tplp-revised/pe.tex
\subsection{Derivation of dimension-bounded CHCs by partial evaluation}
\label{sec:pe}

Definition \ref{dim_instr} showed how to construct $P_{\mathit{dim}}$, an ``instrumented" version of a set of CHCs $P$, such 
that $P_{\mathit{dim}} \vdash p(\tuplevar{t},k)$ if and only if the atom $p(\tuplevar{t})$ has a derivation of dimension $k$ in $P$.  

In this section we apply \emph{partial evaluation} \cite{Jones-Gomard-Sestoft}
to \emph{specialise} $P_{\mathit{dim}}$ with respect to dimension constraints. In particular, from a given set of CHCs $P$, and a dimension bound $k\ge 0$, we generate from $P_{\mathit{dim}}$ sets of clauses $P^{\atmost{k}}$ and $P^{\exceeds{k}}$, whose derivations
have dimension at most $k$ and at least $k+1$ respectively.

For instance, suppose
we wish to generate a set of clauses whose derivations for predicate $p$ have dimension at most 2.
Let $p(\tuplevar{x},k)$  be an atom and let $\phi(k)$ be a constraint restricting
the value of the dimension argument $k$, where in this case $\phi(k) \equiv k \le 2$. The goal of specialisation is to derive a set of clauses $P^{\atmost{2}}$, whose derivations for $p(\tuplevar{x},k)$ satisfy $\phi(k)$. 

Specialisation for this example could be achieved just by replacing each clause in $P_{\mathit{dim}}$ of the form $p(\tuplevar{x},k) \leftarrow \mathit{Body}$ by $p(\tuplevar{x},k) \leftarrow k \le 2 \wedge \mathit{Body}$ in $P^{\atmost{2}}$.
However, a derivation for $\mathit{Body}$ for which $k > 2$ gives an infeasible derivation for  $p(\tuplevar{x},k)$; we would like to eliminate as many such infeasible derivations as possible from $P^{\atmost{2}}$ by partially evaluating the atom $p(\tuplevar{x},k)$ and propagating the given constraint throughout the clauses.
The presence of clauses leading to infeasible derivations tends to cause analysis tools to make coarser approximations. Hence partial evaluation can increase the precision obtained when analysing or verifying dimension-constrained clauses.

\paragraph{Instantiation of a standard algorithm for partial evaluation.}There are many variants of partial evaluation algorithms for CHCs. 
We present here an instantiation of the
``basic algorithm" for partial evaluation of logic programs \cite{gallagher:pepm93}, which is parameterised by an ``unfolding rule" and an abstraction operation. 

The $\unfold_P$ operation is applied to a set of constrained facts $S$ representing goals, and returns a set of constrained facts representing subgoals obtained from the leaves of partial AND-trees for each element of $S$, constructed using the given unfolding rule.
More precisely, 
$$\begin{array}{lll}
\unfold_P(S) &=& \{ p_i(\tuplevar{x_i}) \leftarrow  (\phi \wedge \theta)\vert_{\tuplevar{x_i}} \mid \\
&& \hskip 1 cm  p(\tuplevar{x}) \leftarrow \theta \in S,\\
&& \hskip 1 cm  p(\tuplevar{x}) \leftarrow \phi,p_1(\tuplevar{x_1}),\ldots,p_m(\tuplevar{x_m}) \in P,\\
&& \hskip 1 cm \SAT(\theta \wedge \phi), \\
&& \hskip 1 cm 1 \le i \le m\}.
\end{array}
$$
$\phi\vert_{\tuplevar{v}}$ stands for the constraint $\exists \tuplevar{w}. \phi$, where $\tuplevar{w} = \vars(\phi) \setminus \tuplevar{v}$.

Given a set of constrained facts $S_0$ representing initial goals, the set 
 $\lfp \  \lambda S . (S_0 \cup \unfold_P(S))$ is the set of all constrained facts obtained from nodes in AND-trees for 
 elements of $S_0$.
 That is, if $p(\tuplevar{x}) \leftarrow \theta \in S_0$, $t$ is a feasible AND-tree with root labelled by $p(\tuplevar{x})$, and 
 $q(\tuplevar{y})$ 
 is the label of a node in $t$, then $q(\tuplevar{y}) \leftarrow (\constr(t) \wedge \theta)\vert_{\tuplevar{y}} \in \lfp \  \lambda S . (S_0 \cup \unfold_P(S))$.
 This set is usually infinite, and so we introduce an abstraction operation
$\abst_{\Psi}$  implementing a property-based abstraction \cite{DBLP:conf/tacas/GrebenshchikovGLPR12}.  This is based on a fixed set of constrained facts $\Psi$, and abstracts a set of constrained facts according to which properties in $\Psi$ they satisfy.  Formally,  $\abst_{\Psi}$ 
is defined as follows.
$$\begin{array}{lll}
\abst_{\Psi}(S) &=& \{ \rep_{\Psi}(p(\tuplevar{x}) \leftarrow \theta) \mid p(\tuplevar{x}) \leftarrow \theta \in S \}\\
&&\mathit{where}\\
\rep_{\Psi}(p(\tuplevar{x}) \leftarrow \theta) &=& p(\tuplevar{x}) \leftarrow \bigwedge \{ \psi \mid p(\tuplevar{x}) \leftarrow \psi \in \Psi  \wedge \theta \models \psi \}\\

\end{array}$$
Here, $\rep_{\psi}$ is applied to a constrained fact, returning its abstract ``representative" with respect to the set $\Psi$.
$\abst_{\Psi}(S)$ generalises the constrained facts in $S$;  for every constrained fact $p(\tuplevar{x}) \leftarrow \theta \in S$, there exists a (renamed) constrained fact $p(\tuplevar{x}) \leftarrow \phi \in \abst_{\Psi}(S)$ such that $\theta \models \phi$.
The maximum size of $\abst_{\Psi}(S)$ is $2^{\vert \Psi\vert}$ and so the closure $S^* = \lfp \  \lambda S . (S_0 \cup \abst_{\Psi}(\unfold_P(S)))$
is finite.

The partial evaluation algorithm returns a set of clauses,  $\renameunfold_{\Psi,P}(S^*)$.  The predicates in the clauses are renamed according to their versions;  that is, if $S^*$ contains constrained facts $p(\tuplevar{x}) \leftarrow \theta_1$ and $p(\tuplevar{x}) \leftarrow \theta_2$, then two renamed versions of $p$ are produced.  Formally, $\renameunfold_{\Psi,P}$ is defined as follows.
\[\begin{array}{ll}
\renameunfold_{\Psi,P}(S) = &\{ p^{v_0}(\tuplevar{x}) \leftarrow  \theta \wedge \phi,p_1^{v_1}(\tuplevar{x_1}),\ldots,p_m^{v_m}(\tuplevar{x_m}) \mid  \\
& \hskip 1cm p(\tuplevar{x}) \leftarrow \theta \in S,\\
& \hskip 1cm p(\tuplevar{x}) \leftarrow \phi,p_1(\tuplevar{x_1}),\ldots,p_m(\tuplevar{x_m}) \in P,\\
& \hskip 1cm \SAT(\theta \wedge \phi)\}\\
\end{array}\]
\noindent
where $p^{v_0}$ is the version of $p$ corresponding to its representative $\rep_{\Psi}(p(\tuplevar{x}) \leftarrow \theta)$ and for $1 \le j \le m$, $p_j^{v_j}$ is the version of $p$ corresponding to the representative $\rep_{\Psi}(p_j(\tuplevar{x_j}) \leftarrow (\theta \wedge \phi)\vert_{\tuplevar{x_j}})$.

As we will see in Sections \ref{pe-atmost} and \ref{pe-atleast}, partial evaluation can return clauses, whose dimension is bounded from above or below, depending on the initial set $S_0$ and the set $\Psi$.

\begin{proposition}\label{pe_correctness}
Let $P$ be a set of CHCs, $S_0$ and $\Psi$ be sets of constrained facts, where $S_0 \subseteq \Psi$.
Let $p(\tuplevar{x}) \leftarrow \theta(\tuplevar{x}) \in S_0$. 
Let $P'$ be the set of clauses $\renameunfold_{\Psi,P}(S^*)$ where $S^* = \lfp \  \lambda S . (S_0 \cup \abst_{\Psi}(\unfold_P(S)))$.

Then there exists a renamed version of $p$, say $p^{m}$ in $P'$ such that for all $\tuplevar{t}$, 
$P' \vdash p^{m}(\tuplevar{t})$ if and only if $P \vdash p(\tuplevar{t}) \wedge \theta(\tuplevar{t})$
\end{proposition}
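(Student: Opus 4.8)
The plan is to prove the two directions by setting up a correspondence between feasible AND-trees of $P$ rooted at $p(\tuplevar{t})$ whose root arguments satisfy $\theta$, and feasible AND-trees of $P'$ rooted at the renamed predicate $p^m$. First I would record the structural relationship between the clauses of the two programs: by construction every clause of $P' = \renameunfold_{\Psi,P}(S^*)$ is obtained from a clause $q(\tuplevar{y}) \leftarrow \phi, q_1(\tuplevar{y_1}), \ldots, q_n(\tuplevar{y_n})$ of $P$ by (i) replacing each predicate by one of its versions and (ii) conjoining to the body the constraint $\delta_v$ of the constrained fact $q(\tuplevar{y}) \leftarrow \delta_v \in S^*$ that names the head version. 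I would take $p^m$ to be the version of $p$ associated with $\rep_\Psi(p(\tuplevar{x}) \leftarrow \theta)$; since $p(\tuplevar{x}) \leftarrow \theta \in S_0 \subseteq \Psi$, the constraint naming this version is a conjunction of members of $\Psi$ that includes $\theta$ and is entailed by $\theta$, hence logically equivalent to $\theta$. This equivalence (call it $\delta_m \equiv \theta$) is what lets $\theta(\tuplevar{t})$ appear on both sides of the biconditional.

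For soundness ($P' \vdash p^m(\tuplevar{t})$ implies $P \vdash p(\tuplevar{t})$ and $\theta(\tuplevar{t})$) I would use \emph{version erasure}. Given a feasible AND-tree $t'$ of $P'$ rooted at $p^m(\tuplevar{t})$, deleting the version superscripts from every node and dropping the added $\delta_v$ conjuncts turns each applied $P'$-clause back into the $P$-clause it came from, so the erased tree $t$ is a legitimate AND-tree of $P$ with the same atom labels. Since $\constr(t')$ is exactly $\constr(t)$ strengthened by the extra conjuncts $\delta_v$, satisfiability of $\constr(t')$ yields satisfiability of $\constr(t)$; hence $t$ is feasible and $P \vdash p(\tuplevar{t})$. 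Moreover the clause expanding the root of $t'$ carries the conjunct $\delta_m \equiv \theta$ instantiated at the root arguments $\tuplevar{t}$, so any solution of $\constr(t')$ witnesses $\theta(\tuplevar{t})$.

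For completeness I would \emph{annotate} a given feasible $P$-tree $t$ with versions, top-down, and prove feasibility of the result. Fix a solution $\mu$ of $\constr(t)$; for a node with atom $q(\tuplevar{y})$ write $\tuplevar{s} = \mu(\tuplevar{y})$. I would prove, by induction on tree height, the invariant that each node can be assigned a version $v$ with $q(\tuplevar{y}) \leftarrow \delta_v \in S^*$ and $\mu \models \delta_v$, together with a matching residual clause. The root gets version $m$, and $\mu \models \delta_m$ because $\delta_m \equiv \theta$ and $\theta(\tuplevar{t})$ holds. For the inductive step, if the node is expanded in $t$ by $q(\tuplevar{y}) \leftarrow \phi, q_1, \ldots, q_n$, then $\mu \models \delta_v \wedge \phi$, so $\SAT(\delta_v \wedge \phi)$; therefore $\unfold_P$ followed by $\abst_\Psi$ places each $\rep_\Psi(q_i(\tuplevar{y_i}) \leftarrow (\delta_v \wedge \phi)\vert_{\tuplevar{y_i}})$ into $S^*$ by the fixed-point equation, and these are precisely the child versions $v_i$ chosen by $\renameunfold$, so the residual clause $q^v(\tuplevar{y}) \leftarrow \delta_v \wedge \phi, q_1^{v_1}, \ldots, q_n^{v_n}$ belongs to $P'$. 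Since $\delta_{v_i}$ is entailed by $(\delta_v \wedge \phi)\vert_{\tuplevar{y_i}}$, which $\mu$ satisfies, the invariant $\mu \models \delta_{v_i}$ is preserved and the hypothesis applies to each child subtree. The fully annotated tree has constraint $\constr(t)$ conjoined with the collected $\delta_v$'s, all satisfied by $\mu$, hence it is a feasible $P'$-tree rooted at $p^m(\tuplevar{t})$, giving $P' \vdash p^m(\tuplevar{t})$.

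The main obstacle I anticipate is this inductive invariant in the completeness direction: one must simultaneously guarantee that the generalized version constraints $\delta_v$ attached by $\renameunfold$ remain \emph{satisfied} along the derivation, and that the residual clauses needed to continue the construction actually \emph{exist} in $P'$. Both hinge on two facts I would isolate as small lemmas — that $\abst_\Psi$ only generalizes (so each $\delta_{v_i}$ is weaker than the exact top-down constraint that $\mu$ already satisfies) and that $S^*$ is closed under $\abst_\Psi \circ \unfold_P$ (so the representatives produced at each step are genuinely present as versions). Care is also needed to verify that the version naming is well defined and that the $\SAT$ side-conditions guarding $\unfold_P$ and $\renameunfold$ are met, which again follow by evaluating the accumulated constraints under the fixed solution $\mu$.
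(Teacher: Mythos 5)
Your proof is correct, and on the direction the paper actually proves in detail it coincides with the paper's argument: the paper establishes $P' \vdash p^{m}(\tuplevar{t}) \Rightarrow P \vdash p(\tuplevar{t}) \wedge \theta(\tuplevar{t})$ exactly by your version-erasure step --- every clause of $P'$ corresponds to a clause of $P$ that is identical up to predicate renaming but with a weaker constraint, so a feasible $P'$-tree $u'$ erases to a feasible $P$-tree $u$ with $\constr(u') \rightarrow \constr(u)$, and $\constr(u') \rightarrow \theta(\tuplevar{t})$ because every clause with head $p^{m}$ carries (a constraint entailing) $\theta$ in its body. The real difference is the converse direction: the paper does not prove it, but appeals to ``the soundness of the basic algorithm for partial evaluation,'' remarking only that the argument would go by induction on the iterations of the fixpoint computation. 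You instead supply a self-contained proof: fix a solution $\mu$ of $\constr(t) \wedge \theta(\tuplevar{t})$ and annotate the $P$-tree top-down with versions, using the two lemmas you isolate --- that $S^*$ is closed under $\abst_{\Psi} \circ \unfold_P$ (so each residual clause needed to extend the annotation exists in $P'$) and that $\abst_{\Psi}$ only generalises (so $\mu$ keeps satisfying each version constraint $\delta_{v_i}$). Your induction on tree height, rather than on fixpoint iterations, is arguably the cleaner formulation, and your attention to idempotence of $\rep_{\Psi}$ (so that the version named in a parent's body matches the head version of the child's clauses) addresses a point the paper's appeal to soundness leaves implicit. One slight imprecision, which is harmless: in $\renameunfold_{\Psi,P}$ the constraint conjoined into a clause body is that of the constrained fact in $S^*$ itself, while the head version is named by that fact's \emph{representative}; since facts produced by $\abst_{\Psi}$ equal their representatives and the $S_0$ fact's representative is equivalent to $\theta$ (using $S_0 \subseteq \Psi$), your identification of the two does not affect any step of the argument.
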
 
\begin{proof}
\begin{itemize}
\item
There exists $p^m$ such that $P \vdash p(\tuplevar{t}) \wedge \theta(\tuplevar{t})$ $\Rightarrow$ $P' \vdash p^m(\tuplevar{t})$. This follows from the soundness of 
the basic algorithm for partial evaluation, namely that it preserves the derivations that satisfy the input constraint. The proof is by induction on
the iterations of the computation of the fixpoint, and we do not give a proof here.
\item
There exists $p^m$ such that  $P' \vdash p^m(\tuplevar{t})$ $\Rightarrow$ $P \vdash p(\tuplevar{t}) \wedge \theta(\tuplevar{t})$.
By assumption, $p(\tuplevar{x}) \leftarrow \theta(\tuplevar{x}) \in \Psi$.  Hence $S^*$ contains a constrained fact  of the form
$p(\tuplevar{x}) \leftarrow \theta(\tuplevar{x}) \wedge \psi_1(\tuplevar{x}) \wedge \ldots \wedge \psi_j(\tuplevar{x})$, where $\{p(\tuplevar{x}) \leftarrow \theta(\tuplevar{x}), p(\tuplevar{x}) \leftarrow  \psi_1(\tuplevar{x}), \ldots p(\tuplevar{x}) \leftarrow  \psi_j(\tuplevar{x})\} \subseteq \Psi$ $(j \ge 0)$ and $\theta(\tuplevar{x}) \models \psi_i(\tuplevar{x})$, $0 \le i \le j$. 
Let $p^{m}$ be the renamed predicate corresponding to $p(\tuplevar{x}) \leftarrow \theta(\tuplevar{x}) \wedge \psi_1(\tuplevar{x}) \wedge \ldots \wedge \psi_j(\tuplevar{x})$.   For every clause with head $p^m(\tuplevar{x})$, its body contains $\theta(\tuplevar{x})$.

Let $u'$ be a feasible AND-tree for $p^m(\tuplevar{t})$ in $P'$. 
 By construction,
for every clause in $P'$, there is a clause in $P$ that is identical except for (a) predicate names and (b) the clause constraint, which is weaker in $P$ than in $P'$.  
Hence there is a feasible AND-tree $u$ for 
$p(\tuplevar{t})$ in $P$, that is identical to $u'$ except for predicate names, and $\constr(u') \rightarrow  \constr(u)$. 
Furthermore,  $\constr(u') \rightarrow \theta(\tuplevar{t})$ (as $\theta(\tuplevar{x})$ is in all clauses with head $p^m(\tuplevar{x})$), hence $\SAT(\theta(\tuplevar{t}) \wedge \constr(u))$. Hence there
is a feasible AND-tree for  $p(\tuplevar{t}) \wedge \theta(\tuplevar{t})$, that is, $P \vdash p(\tuplevar{t}) \wedge \theta(\tuplevar{t})$.
\end{itemize}

\end{proof}

\begin{example}
Let $P$ be the following clauses (containing no constraints in order to simplify the example).
 \begin{lstlisting}
p:- true.       p:- p,p.
\end{lstlisting}
\noindent
$P_{\mathit{dim}}$ is the following set of clauses, after unfolding the $\mathit{dim}$ predicates.
 \begin{lstlisting}
p(K):- K=0.
p(K):- p(K1), p(K2), K1>=K2+1, K=K1.
p(K):- p(K1), p(K2), K2>=K1+1, K=K2.
p(K):- p(K1), p(K2), K1=K2, K=K1+1.
\end{lstlisting}
\noindent
Let $\Psi$ in the   algorithm be $\{\texttt{p(K):-K=<1},\ \texttt{p(K):-K=<0}\}$ and $S_0 = \{\texttt{p(K):-K=<1}\}$. 
To compute $ \lfp \  \lambda S . (S_0 \cup \abst_{\Psi}(\unfold_P(S)))$, the algorithm computes sets $S_0, S_1,\ldots$ where $S_{i+1} = S_i ~\cup~ \abst_{\Psi}(\unfold_P(S_i))$. The $\lfp$ is the limit of this sequence, which is reached when
$S_{i+1} = S_i$. The key steps in the execution are as follows.
\begin{itemize}
\item $\unfold_P(S_0)$ first constructs the set of clauses in $P$ with $\texttt{K=<1}$ added to each body. For example, from the second clause in $P_{\mathit{dim}}$ we obtain:
 \begin{lstlisting}
p(K):- p(K1), p(K2), K1>=K2+1, K=K1, K=<1.
\end{lstlisting}
\noindent
The two constrained atoms in the above clause, after checking satisfiability and projecting the constraints onto their variables, are  
\[\{\texttt{p(K1):- K1=<1}, \ \ \texttt{p(K2):- K2=<0}\}\enspace .\]
\noindent
Applying $\abst_{\Psi}$ to this set yields $S_1 = $
\[\{\texttt{p(K):- K=<1}, \ \ \texttt{p(K):- K=<1,K=<0}\}\enspace .\]
\noindent
The other clauses are treated similarly but no other new constrained facts are returned.
\item Since $S_0 \neq S_1$, we compute $S_2 = S_1 \cup \abst_{\Psi}(\unfold_P(S_1))$. Since no new constrained facts are generated by 
this step (that is, $S_2 = S_1$),  the limit of the sequence is reached and so $S_2 =  \lfp \  \lambda S . (S_0 \cup \abst_{\Psi}(\unfold_P(S)))$.
\item $\renameunfold_{\Psi,P}(S_2)$ returns the following set of clauses.  The renaming distinguishes the two atoms in $S_2$,
renaming the predicate \texttt{p} as $\texttt{p\_1}$, corresponding to $\texttt{p(A,K):- K=<1,K=<0}$, and $\texttt{p\_2}$ 
corresponding to $\texttt{p(A,K):- K=<1}$.
 \begin{lstlisting}
p_2(B):- B=0.
p_2(B):- B>=F+1, B=<1, B=D, p_2(D), p_1(F).
p_2(B):- B>=D+1, B=<1, B=F, p_1(D), p_2(F).
p_2(B):- B=<1, B=D+1, B=F+1, p_1(D), p_1(F).
p_1(B):- B=0.
p_1(B):- B>=F+1, B=<0, B=D, p_1(D), p_1(F).
p_1(B):- B>=D+1, B=<0, B=F, p_1(D), p_1(F).
p_1(B):- B=<0, B=D+1, B=F+1, p_1(D), p_1(F).

\end{lstlisting}
\end{itemize}
\noindent
We notice that for predicate \texttt{p\_1} the last three clauses cannot succeed since they would yield a derivation whose dimension is greater than 0 and hence the constraints in those clauses would not be satisfied.  However, we can see that the successful derivations of \texttt{p\_1(K)} have \texttt{K=<0} and the successful derivations of \texttt{p\_2(K)} have \texttt{K=<1}. 
\end{example}

In Sections \ref{pe-atmost} and \ref{pe-atleast}, the partial evaluation algorithm is applied to $P_{\mathit{dim}}$ after first unfolding the $\mathit{dim}$ atoms (as shown in Figure \ref{fig:fibinstrumented-unfolded} for the clauses for \textit{Fib}), suitably instantiating the inputs $S_0$ and $\Psi$, to generate clauses whose derivations have dimensions that are bounded from above and below respectively.

\begin{figure}
	\centering
\begin{lstlisting}
fib(A,B,0) :- A>=0, A=<1, A=B. 
fib(A,B,K) :- A>1, D=A-2, E=A-1, B=F+G, 
          fib(D,G,K2), fib(E,F,K1), K1+1=<K, K2=K. 
fib(A,B,K) :- A>1, D=A-2, E=A-1, B=F+G, fib(D,G,K1), 
          fib(E,F,K2), K1+1=<K, K=K2. 
fib(A,B,K) :- A>1, D=A-2, E=A-1, B=F+G, 
          fib(D,G,K1), fib(E,F,K2),  K1=K-1, K2=K1.
\end{lstlisting}
\caption{Dimension instrumented \textit{Fib} program after unfolding dimension predicates.}
\label{fig:fibinstrumented-unfolded}
\end{figure}

\subsubsection{Construction of at-most-$k$-dimension set of clauses}\label{pe-atmost}

Given an instrumented set of CHCs $P_{\mathit{dim}}$ and $k \ge 0$, we apply the partial evaluation algorithm to obtain $P^{\atmost{k}}$, the at-most-$k$ dimension clauses. Let $\lhd \in \{ =, \leq\}$ and in the algorithm,  let $S_0 = \{ p(\tuplevar{x},z) \leftarrow z \lhd k \mid p ~\mathrm{is~a~predicate~in~}P\}$ and $\Psi = \{p(\tuplevar{x},z) \leftarrow z \lhd d \mid 0 \le d \le k,~p ~\mathrm{is~a~predicate~in~}P\}$.

Figure \ref{fig:pe-fib-atmost-1} shows the at-most-$1$-dimension clauses for \textit{Fib}.  The predicate names have been chosen to reflect the dimension constraints.  The final argument is the dimension, as in the instrumented clauses.
\begin{figure}[ht!]
\centering
\begin{lstlisting}
false$^{\exactly{1}}$(A):- C>5,C-D>0,A=1,fib$^{\exactly{1}}$(C,D,A).
false$^{\atmost{1}}$(A):- A>=0,C>5,C-D>0,-A>= -1,fib$^{\atmost{1}}$(C,D,A).

fib$^{\exactly{1}}$(A,B,C):- A>1,C=1,A-E=2,A-F=1,
                            B-G-H=0,I=0,fib$^{\exactly{1}}$(E,H,C),fib$^{\exactly{0}}$(F,G,I).
fib$^{\exactly{1}}$(A,B,C):- A>1,C=1,A-E=2,A-F=1,
                            B-G-H=0,I=0,fib$^{\exactly{0}}$(E,H,I),fib$^{\exactly{1}}$(F,G,C).
fib$^{\exactly{1}}$(A,B,C):- A>1,C=1,A-E=2,A-F=1,
                            B-G-H=0,I=0,fib$^{\exactly{0}}$(E,H,I),fib$^{\exactly{0}}$(F,G,I).
fib$^{\exactly{0}}$(A,B,C):- A>=0,-A>= -1,A-B=0,C=0.
fib$^{\atmost{1}}$(A,B,C):- A>=0,-A>= -1,A-B=0,C=0.
fib$^{\atmost{1}}$(A,B,C):- A>1,C=1,A-E=2,A-F=1,
                            B-G-H=0,I=0,fib$^{\exactly{1}}$(E,H,C),fib$^{\exactly{0}}$(F,G,I).
fib$^{\atmost{1}}$(A,B,C):- A>1,C=1,A-E=2,A-F=1,
                            B-G-H=0,I=0,fib$^{\exactly{0}}$(E,H,I),fib$^{\exactly{1}}$(F,G,C).
fib$^{\atmost{1}}$(A,B,C):- A>1,C=1,A-E=2,A-F=1,
                            B-G-H=0,I=0,fib$^{\exactly{0}}$(E,H,I),fib$^{\exactly{0}}$(F,G,I).

\end{lstlisting}
\caption{\textit{Fib}$^ {\atmost{1}}:$ at-most-\(1\)-dimension version of \textit{Fib}.}
 \label{fig:pe-fib-atmost-1}
\end{figure}

Note that for each predicate $p$ and each $d$, $0 \le d \le k$, the partial evaluation produces versions for both $p^{\atmost{d}}$ and
$p^{\exactly{d}}$ (though the set of clauses for some of these versions might be empty).

By Proposition \ref{pe_correctness}, for each predicate $p$ of $P$, $P^{\atmost{k}}$ contains a predicate (which we call $p^{\atmost{k}}$) all of whose derivations
have dimension at most $k$.

\subsubsection{Construction of at-least-$k$-dimension set of clauses}\label{pe-atleast}

We obtain $P^{\exceeds{k-1}}$, the at-least-$k$ dimension clauses in a similar way. In the algorithm, let $S_0 = \{ p(\tuplevar{x},z) \leftarrow z \ge k \mid p ~\mathrm{is~a~predicate~in~}P\}$. Let $\Psi = \{p(\tuplevar{x},z) \leftarrow z \ge d \mid 0 \le d \le k, ~p ~\mathrm{is~a~predicate~in~}P\}$.

Figure \ref{fig:pe-fib-atleast-1} shows the at-least-$1$-dimension clauses for \textit{Fib}.  The predicate names have been chosen to reflect the dimension constraints.

\begin{figure}[ht]
\centering
\begin{lstlisting}
false$^{\ge{1}}$(A):- A>=1,C>5,C-D>0,fib$^{\ge{1}}$(C,D,A).

fib$^{\ge{1}}$(A,B,C):- A>1,C-I>=1,I>=0,A-E=2,A-F=1,
                          B-G-H=0,fib$^{\ge{1}}$(E,H,C,J),fib$^{\anydim{0}}$(F,G,I,K).
fib$^{\ge{1}}$(A,B,C):- A>1,C-I>=1,I>=0,A-E=2,A-F=1,
                              B-G-H=0,fib$^{\anydim{0}}$(E,H,I),fib$^{\ge{1}}$(F,G,C).
fib$^{\ge{1}}$(A,B,C):- A>1,C>=1,A-E=2,A-F=1,
                         B-G-H=0,C-I=1,fib$^{\anydim{0}}$(E,H,I),fib$^{\anydim{0}}$(F,G,I).
fib$^{\anydim{0}}$(A,B,C):- A>=0,-A>= -1,A-B=0,C=0.
fib$^{\anydim{0}}$(A,B,C):- A>1,C-I>=1,I>=0,A-E=2,A-F=1,
                               B-G-H=0,fib$^{\ge{1}}$(E,H,C),fib$^{\anydim{0}}$(F,G,I).
fib$^{\anydim{0}}$(A,B,C):- A>1,C-I>=1,I>=0,A-E=2,A-F=1,
                               B-G-H=0,fib$^{\anydim{0}}$(E,H,I),fib$^{\ge{1}}$(F,G,C).
fib$^{\anydim{0}}$(A,B,C):- A>1,C>=1,A-E=2,A-F=1,
                          B-G-H=0,C-I=1,fib$^{\anydim{0}}$(E,H,I),fib$^{\anydim{0}}$(F,G,I).
\end{lstlisting}
\caption{\textit{Fib}$^ {\exceeds{0}}:$ at-least-\(1\)-dimension version of \textit{Fib}.}
 \label{fig:pe-fib-atleast-1}
\end{figure}

By Proposition \ref{pe_correctness}, for each predicate $p$ of $P$, $P^{\exceeds{k-1}}$ contains a predicate (which we call $p^{\ge{k}}$ or
sometimes $p^{\exceeds{k-1}}$) all of whose derivations
have dimension at least $k$.
 
%
%
%
%
%
%
%
%
%
%
%
%
%
%
%
%
%
%
%
%
%
%
%
%
%
%
%
%
%
%
%

 % end input /Users/jpg/Research/Papers/bounded-dimension-tplp-revised/pe.tex
 
%
% start input /Users/jpg/Research/Papers/bounded-dimension-tplp-revised/verification_algorithms.tex
%
\section{Verification Algorithms}
\label{sec:verification_algorithm}

In this section, we describe two algorithms for verification of CHCs, based on the notion of tree dimension. The verification problem we address is to decide whether a given set of CHCs has a model. In case it has no model, the problem is to find a counterexample.
A set of CHCs has a model if and only if there is no derivation of $\false$ from the clauses (or of $\false^{\atmost{k}} ~or~ \false^{\exceeds{k}}$ for some dimension bounded version of $\false$). Such a derivation exists only if the set contains at least one integrity constraint (clause with head $\false$ (or $\false^{\atmost{k}} ~or~ \false^{\exceeds{k}}$)).  
A set containing no integrity constraints has at least one model, namely the interpretation consisting of $p(x) \leftarrow \trueit$ for every predicate $p$ in the clauses.

\subsection{Decomposition by dimension of verification problem}

We first present an algorithm exploiting the decomposition of a set $P$ of CHCs into complementary sets $P^{\atmost{k}}$ and $P^{\exceeds{k}}$.
For each $k$, these two sets can be solved separately (possibly in parallel). 
\begin{proposition}[Decomposition by dimension] 
\label{decomposition-soundness}
A set of CHCs $P$ is safe if and only if for some $k$, both $P^{\atmost{k}}$ and $P^{\exceeds{k}}$ are safe.
\end{proposition}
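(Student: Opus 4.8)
The plan is to reduce the claim to a dichotomy on the dimension of counterexamples, leaning on the two correctness results already available. Recall that, by the soundness and completeness of derivation trees, $P$ is safe precisely when $P \not\vdash \false$, i.e.\ when $P$ admits no feasible AND-tree rooted at $\false$ (no counterexample). Since every tree considered is finite (Section~\ref{sec:tree_dimension}), each counterexample has a well-defined dimension in $\mathbb{N}$ by Definition~\ref{def:deriv-dim}. The pivotal observation is that, for any fixed $k$, the counterexamples of $P$ fall into exactly two disjoint and jointly exhaustive classes: those of dimension at most $k$ and those of dimension exceeding $k$.

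First I would make precise how the safety of each decomposed set reflects the existence of counterexamples in a prescribed dimension range. Chaining the correctness property of the dimension instrumentation (the proposition immediately following Definition~\ref{dim_instr}, which states $P_{\mathit{dim}} \vdash p(\tuplevar{t},k)$ iff $p(\tuplevar{t})$ has a derivation of dimension $k$ in $P$) with Proposition~\ref{pe_correctness} applied to the construction of Section~\ref{pe-atmost}, I would obtain that $P^{\atmost{k}}$ is unsafe if and only if $P$ has a counterexample of dimension at most $k$. Symmetrically, instantiating Proposition~\ref{pe_correctness} on the construction of Section~\ref{pe-atleast} gives that $P^{\exceeds{k}}$ is unsafe if and only if $P$ has a counterexample of dimension at least $k+1$.

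Granted these two equivalences, the proposition follows by a brief case analysis. For the forward implication, if $P$ is safe it has no counterexample whatsoever, hence none of dimension at most $k$ and none of dimension exceeding $k$; thus $P^{\atmost{k}}$ and $P^{\exceeds{k}}$ are both safe for every $k$, and in particular for some $k$. For the converse, assume both $P^{\atmost{k}}$ and $P^{\exceeds{k}}$ are safe for some $k$. Then $P$ has no counterexample of dimension at most $k$ and none of dimension exceeding $k$; as the dimension of any counterexample is a non-negative integer, these two ranges exhaust all possibilities, so $P$ has no counterexample and is safe.

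I expect the case split itself to be routine; the real work lies in justifying the two equivalences relating safety of the decomposed sets to dimension-bounded counterexamples of $P$. In particular, one must verify that $\false$ is instrumented like any other predicate (so that $\false^{\atmost{k}}$ and $\false^{\exceeds{k}}$ genuinely track the dimension of counterexamples), and that the seeds $S_0$ and property sets $\Psi$ chosen in Sections~\ref{pe-atmost} and \ref{pe-atleast} satisfy the hypothesis $S_0 \subseteq \Psi$ of Proposition~\ref{pe_correctness}, so that its conclusion may legitimately be specialised to $\false^{\atmost{k}}$ (respectively $\false^{\exceeds{k}}$) to yield the stated characterisations.
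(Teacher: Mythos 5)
Your proof is correct and follows essentially the same route as the paper's: both rely on Proposition~\ref{pe_correctness} together with the constructions of Sections~\ref{pe-atmost} and \ref{pe-atleast} to equate safety of $P^{\atmost{k}}$ and $P^{\exceeds{k}}$ with the absence of counterexamples of dimension $\le k$ and $> k$ respectively, and then conclude via the dichotomy that every counterexample's dimension falls in one of the two ranges. You spell out both directions and the side conditions (e.g.\ $S_0 \subseteq \Psi$, instrumentation of $\false$) more explicitly than the paper, which compresses the argument into a chain of equivalences, but the substance is identical.
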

\begin{proof}
Let  both $P^{\atmost{k}}$ and $P^{\exceeds{k}}$ be safe, for some $k$.  Equivalently, $P^{\atmost{k}} \not\vdash \false$ 
and $P^{\exceeds{k}} \not\vdash \false$. By the constructions in Sections \ref{pe-atmost} and
\ref{pe-atleast} and Proposition \ref{pe_correctness}, there is no derivation of $\false$ in $P$ of dimension $\le k$ or of dimension $> k$, which is equivalent to $P\not\vdash \false$, i.e. $P$ is safe.
\end{proof}
 
The essence of the algorithm based on tree dimension is to decompose $P$ into $P^{\atmost{k}}$ and $P^{\exceeds{k}}$ for successive values of $k$.   
 If for some $k$, both of them are safe, then $P$ is also safe, by Proposition \ref{decomposition-soundness}.  $P$ is unsafe if we find a $k$ such that one of them is unsafe.
 
\paragraph{Lifting interpretations.}
We introduce a lifting which constructs a syntactic interpretation for a set of CHCs given a syntactic interpretation for an annotated version of the same set of CHCs.

\begin{definition}[$S^{\uparrow}$: Lifting of an interpretation]
\label{liftingDefinition}
Let $Pred$ be a set of predicates, $I$ be a finite set.
Define $Pred^{I}=\{ p^\triangle \mid p\in Pred, \triangle \in I\}$ and let $S$ be an interpretation of $Pred^{I}$ given by constrained facts. 
Then $S^{\uparrow}$ is the following set of constrained facts:
\[
	S^{\uparrow} =   \{ p(x) \leftarrow {\textstyle \bigvee_{(p^{\triangle} (x)\leftarrow  \phi) \in S}} \ \phi 
	\mid p^\triangle \in Pred^{I}\}\enspace . 
\]
\end{definition}

The procedure SolvePartition defined in Algorithm \ref{alg:verify} makes use of a procedure $\Safe(P)$, which is a sound oracle: if it returns (\safe, \solution) then $P$ is safe; if it returns (\unsafe, \counterexample) then  $P$ is unsafe and the counterexample proves it; else we know nothing about $P$ and \stunknown\ is returned.
The oracle $\Safe$ could be any existing automatic Horn clause solver  \cite{DBLP:conf/tacas/GrebenshchikovGLPR12,DBLP:conf/cav/KafleGM16,DBLP:conf/sat/HoderB12,DBLP:conf/tacas/AngelisFPP14,LPAR-21:Synchronizing_Constrained_Horn_Clauses} possibly with a timeout.  
When it cannot verify a program within a given time limit, it returns \stunknown. 

Consider a call SolvePartition($P,k,\emptyset$), the algorithm checks first (using the oracle) whether $P^{\atmost{k}}$ is safe and if so then it proceeds to check whether $P^{\exceeds{k}}$ is safe. 
If, for either set of CHCs, the oracle returns \unsafe\ then the algorithm returns \unsafe. Similarly, if, for both sets of CHCs, the oracle returns \safe\  then the algorithm returns \safe\ together with the interpretation built so far augmented with the current \solution\ $R'$ (\emph{line 13}),  defining a model for $P$.
Otherwise, the $\Safe$ oracle returns \stunknown.  The \stunknown\ for $P^{\atmost{k}}$ is propagated and the \stunknown\ for
$P^{\exceeds{k}}$ causes the algorithm to proceed by calling itself recursively on the set $P^{\exceeds{k}}$, with $k+1$ and with the interpretation built so far.

\begin{algorithm}[h!]
\caption{SolvePartition($P$,$k$,$S$)\label{alg:verify}}
\begin{algorithmic}[1]
	\State \textbf{Input:} A set of CHCs $P$, an integer $k\geq 0$, and an interpretation $S$ (init $\emptyset$)
   \State \textbf{Output:} (\safe, \solution) $\mid$ (\unsafe, \counterexample) $\mid$ \stunknown
	 \State  $(\status, \mathit{R}) \gets \Safe$($P^{\atmost{k}}$) 
	\If {\status=\unsafe} 
	\State  \Return $(\unsafe\, \mathit{R}) $  \Comment{$P^{\atmost{k}}$ is \emph{unsafe}, hence $P$ is \emph{unsafe}} 
 	 \EndIf
	 \If {{\status=\stunknown} } 
	 \State  \Return \stunknown  \Comment{$P^{\atmost{k}}$ may be \emph{safe} or \emph{unsafe}, so is $P$} 
 	 \EndIf
	 \State $P^{\exceeds{ }} \gets  P^{\exceeds{k}}$ \Comment{We turn to $P^{\exceeds{k}}$ as $P^{\atmost{k}}$ is  \emph{safe}} 
    \State $(\status, \mathit{R'})  \gets \Safe(P^{\exceeds{ }})$
    \If {\status=\unsafe} 
    \State \Return $(\unsafe, \mathit{R'})$  \Comment{$P^{\exceeds{k}}$ is \emph{unsafe}, hence $P$ is  \emph{unsafe}} 
    \EndIf 
    \If {\status =\safe} 
    \State \Return $(\safe, (S\cup R\cup R')^{\uparrow})$  \Comment{$P^{\atmost{k}}$ and $P^{\exceeds{k}}$ are \emph{safe}, hence $P$ is \emph{safe}} 
    \EndIf 
    \State \Return SolvePartition($P^{\exceeds{}}$,$k+1$,$S\cup R$) \Comment{recurse: $P^{\exceeds{k}}$ may be \emph{safe} or \emph{unsafe}}
\end{algorithmic}
\end{algorithm}

\begin{example}
Applying the algorithm to our example program \emph{Fib}, the oracle $\Safe$ finds  that  both $Fib^{\atmost{0}}$  and $Fib^{\exceeds{0}}$  are safe, and thus  \emph{Fib} is safe. 
\end{example}

The soundness of the above algorithm follows from the soundness of the oracle and the properties of dimension bounded set of clauses, which is formally stated by the following proposition.

\begin{proposition}[Soundness] 
  If Algorithm \ref{alg:verify} returns  (\emph{un})\emph{safe} on a set of CHCs $P$ then $P$ is (un)safe.
\end{proposition}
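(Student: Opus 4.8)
The plan is to argue by induction on the number of recursive invocations of \textsc{SolvePartition} made during the (necessarily finite) execution that produces the reported status, using as the two workhorses Proposition~\ref{pe_correctness} and the decomposition result Proposition~\ref{decomposition-soundness}. Before the induction I would record the two basic facts that the dimension-bounded sets under-approximate $P$: namely that $P^{\atmost{k}} \vdash \false$ implies $P \vdash \false$, and $P^{\exceeds{k}} \vdash \false$ implies $P \vdash \false$. Both follow from the constructions of Sections~\ref{pe-atmost} and \ref{pe-atleast} together with Proposition~\ref{pe_correctness}: a feasible AND-tree rooted at the dimension-annotated copy of $\false$ in $P^{\atmost{k}}$ (resp.\ $P^{\exceeds{k}}$) maps, via the predicate renaming carried out by $\renameunfold$, to a feasible AND-tree of the same shape rooted at $\false$ in $P$, so $P$ has a counterexample and is therefore unsafe.

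Next I would handle the base case, i.e.\ a call that returns without recursing. There are three such returns that report a definite status. If \unsafe\ is returned at line~5, then soundness of the oracle $\Safe$ gives that $P^{\atmost{k}}$ is unsafe, and the first basic fact above yields that $P$ is unsafe; symmetrically, \unsafe\ at line~11 gives $P^{\exceeds{k}}$ unsafe and hence $P$ unsafe. If \safe\ is returned at line~13, soundness of $\Safe$ gives that both $P^{\atmost{k}}$ and $P^{\exceeds{k}}$ are safe, and Proposition~\ref{decomposition-soundness}, instantiated at this particular $k$, gives that $P$ is safe. A return of \stunknown\ asserts nothing and needs no argument.

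For the inductive step I would consider the recursive return at line~14, where the reported status is that of the call $\textsc{SolvePartition}(P^{\exceeds{k}},k+1,S\cup R)$, which performs strictly fewer recursive invocations and so falls under the induction hypothesis applied with $P^{\exceeds{k}}$ in the role of the input set. If that call returns \unsafe, the hypothesis gives that $P^{\exceeds{k}}$ is unsafe, and the second basic fact lifts this to $P$ being unsafe. If it returns \safe, the hypothesis gives that $P^{\exceeds{k}}$ is safe; moreover line~14 is reached only when $\Safe$ has already declared $P^{\atmost{k}}$ safe (otherwise the algorithm would have returned at line~5 or line~7), so by Proposition~\ref{decomposition-soundness} at this $k$ the set $P$ is safe.

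The main obstacle, and the step I would write out most carefully, is the glue connecting the recursion on $P^{\exceeds{k}}$ back to $P$: the induction hypothesis only speaks about the set handed to the recursive call, so I must be sure that (un)safety of $P^{\exceeds{k}}$ transfers to (un)safety of $P$. For the unsafe direction this is exactly the second basic fact, and the care needed lies in checking that the renaming and constraint-strengthening performed by $\renameunfold$ preserve the shape of the AND-tree, hence its dimension, which is precisely what the argument in the proof of Proposition~\ref{pe_correctness} establishes. For the safe direction the transfer is supplied directly by Proposition~\ref{decomposition-soundness}, so no independent argument about the internal structure of $P^{\exceeds{k}}$ is required. I note that I do not need to verify that the returned \solution\ or \counterexample\ is itself a model or a genuine counterexample, since the statement concerns only the reported status; the lifting $S^{\uparrow}$ of Definition~\ref{liftingDefinition} therefore plays no role in this particular proof.
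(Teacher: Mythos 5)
Your proof is correct and takes essentially the same route as the paper, which justifies the proposition in a single sentence by appeal to the soundness of the oracle $\Safe$ together with the under-approximation property of $P^{\atmost{k}}$ and $P^{\exceeds{k}}$ (Proposition~\ref{pe_correctness}) and the decomposition result (Proposition~\ref{decomposition-soundness}) --- exactly the ingredients you invoke. Your induction on the number of recursive invocations merely makes explicit the handling of the recursion at line~14 that the paper leaves implicit, and your observation that the interpretation argument $S$ and the lifting $S^{\uparrow}$ are irrelevant to the reported status is accurate.
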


 % end input /Users/jpg/Research/Papers/bounded-dimension-tplp-revised/verification_algorithms.tex
 %
% start input /Users/jpg/Research/Papers/bounded-dimension-tplp-revised/algorithm-kdim.tex
%
\subsection{Verification by successive iteration of bounded dimension CHCs}

For an unsafe program $P$,  there exists some $k_0 \geq 0$ such that $P^{\atmost{k}}$ is \emph{unsafe} for all $k \geq k_0$. 
So for discovering a bug, we can  generate $P^{\atmost{k}}$ successively for $k=0,1,2, \dots$ and check its safety,  as in bounded model checking (BMC). 
In BMC, the original program and the bounded underapproximations are decidable. By contrast, the under-approximations obtained by dimension bounding are themselves undecidable and there is no upper bound on the dimension.

However, a solution of a bounded dimension program can extend to a solution of the original problem as we shall see.  The example in Figure \ref{ex:revlen} is the at-most-1-dimension version of the example in Figure \ref{ex:revlen}. The oracle \Safe\ derives the following invariant for the predicates $\mathtt{applen^{\atmost{1}}}$ and $\mathtt{revlen^{\atmost{1}}}$ from it. This invariant (mapped to the original program using Definition \ref{liftingDefinition}) is in fact an invariant of the original program (in Figure \ref{ex:revlen}).  Thus the solution of an underapproximation is the solution of the original program. 
\[
\begin{aligned}
& \mathtt{applen^{\atmost{1}}(A,B,C) \leftarrow B \geq 0 \wedge A\geq 0 \wedge A+B=C}.  \\
& \mathtt{revlen^{\atmost{1}}(A,B) \leftarrow B\geq0 \wedge A=B}.
\end{aligned}
\]

\begin{figure}[t]
\begin{lstlisting}
applen(A,B,C):- A=0, B=C, B>=0.
applen(A,B,C):- applen(A1,B,C1), A=A1+1, C=C1+1.

revlen(A,B):- A=0, B=0.
revlen(A,B):- revlen(A1,C), applen(C,D,B), A=A1+1, D=1.

false :- revlen(A,B), A$\neq$B. 
\end{lstlisting} 
\caption{Length abstracted version of reverse of a list.}
 \label{ex:revlen}
\end{figure}

\begin{figure}[t]
\begin{lstlisting}
applen$^{=0}$(A,B,C):- A=0,   B=C,  B>=0.
applen$^{=1}$(A,B,C):- A=D+1, C=E+1, applen$^{=1}$(D,B,E).
applen$^{=0}$(A,B,C):- A=D+1, C=E+1, applen$^{=0}$(D,B,E).
applen$^{\atmost{1}}$(A,B,C):- applen$^{=1}$(A,B,C). 
applen$^{\atmost{1}}$(A,B,C):- applen$^{=0}$(A,B,C).
applen$^{\atmost{0}}$(A,B,C):- applen$^{=0}$(A,B,C).

revlen$^{=0}$(A,B) :- A=0,  B=0.
revlen$^{=1}$(A,B) :- A=C+1, E=1,  applen$^{\atmost{0}}$(D,E,B), revlen$^{=1}$(C,D).
revlen$^{=1}$(A,B) :- A=C+1,  E=1, revlen$^{\atmost{0}}$(C,D),  applen$^{=1}$(D,E,B).
revlen$^{=1}$(A,B) :-  A=C+1,  E=1, revlen$^{=0}$(C,D),  applen$^{=0}$(D,E,B).
revlen$^{\atmost{1}}$(A,B) :- revlen$^{=1}$(A,B).  revlen$^{\atmost{1}}$(A,B) :- revlen$^{=0}$(A,B).
revlen$^{\atmost{0}}$(A,B) :- revlen$^{=0}$(A,B).   

false$^{=1}$:- A$\neq$B,  revlen$^{=1}$(A,B).  false$^{=0}$ :- A$\neq$B, revlen$^{=0}$(A,B).
false$^{\atmost{1}}$ :- false$^{=1}$. false$^{\atmost{1}}$ :- false$^{=0}$. false$^{\atmost{0}}$ :- false$^{=0}$.

\end{lstlisting} 
\caption{At-most-1-dim version of reverse list example in Figure \ref{ex:revlen}}
 \label{revlen-1dim}
\end{figure}

Therefore,  the safety of $P^{\atmost{k}}$ also can be checked successively for increasing value of $k$ starting from \(0\) until $P^{\atmost{k}}$ (for some $k$) is proven \unsafe~ or its solution generalises to $P$ or the results for $P^{\atmost{k}}$ is \stunknown. A solution for $P^{\atmost{k}}$ generalises to $P$ if the lifted model of $P^{\atmost{k}}$ using Definition \ref{liftingDefinition} is also a model of $P$. The iteration done  this way does not make any reuse of solutions of lower dimension while verifying a program of higher dimension, which could save some verification effort. The iteration in which the iterates of higher dimension  reuses solutions from lower dimensions is reminiscent of  \emph{Newtonian iteration} \cite{DBLP:journals/jacm/EsparzaKL10}. However, reuse introduces a new problem since solutions are approximate.  If a counterexample is found for $P^{\atmost{k+1}}$ (where solutions from lower dimensions are used), it  needs to be further examined since it may not be a counterexample for $P^{\atmost{k+1}}$ (in which no solutions are used from lower dimensions).  We present a solution to this problem in  Algorithm \ref{alg:verify3} via refinement of approximations. 
Before presenting the algorithm, we first introduce Definition \ref{cexprojection} which defines the subset of $S$ of constrained facts involved in a derivation $t$.  

\begin{definition}[$S_{|t}$]
\label{cexprojection}
Let $S$ be an interpretation of a set of CHCs $P$ given by constrained facts and let $t$ be any derivation in $P$. Define $S_{|t}$ to be 
\[S_{|t}=\{ (A \leftarrow \phi) \mid  (A \leftarrow \phi) \in S \wedge \text{atom $A$ labels a node of $t$} \} \enspace.\]
\end{definition}

Next, we define the auxiliary procedure \textsf{subst}() as follows.
\begin{definition}[\textsf{subst}($P$,$S$)]
\label{def:subst}
Given a set of CHCs $P$ and an interpretation $S$, define
\textsf{subst}(\(P\),\(S\)) as the set of CHCs obtained as follows:  for every constrained fact $A \leftarrow \phi$ in $S$, replace all the clauses from $P$ whose head is $A$ with the clause $A \leftarrow \phi$.
\end{definition}

We now turn to Algorithm~\ref{alg:verify3}. Consider a call SolveInc($P,k,\emptyset$); the algorithm checks first (using the oracle) whether $P^{\atmost{k}}$ with the information provided by $S$ “plugged in” using \textsf{subst} is safe (\emph{line 3}). 
If the oracle returns \stunknown, the algorithm returns \stunknown~(\emph{line 4-5}). 
Else if the oracle returns \unsafe, the \counterexample~$R$ is further examined (\emph{line 6-9}). 
If it uses no constrained facts of $S$ then the counterexample is also a counterexample for $P$ (\emph{line 7-8}).
In the case that some constrained facts of $S$ are used in $R$ then the algorithm recurses with those facts removed from $S$ (\emph{line 9}). 
Finally, if the oracle returns \safe~the algorithm checks whether the model extends to $P$ and returns \safe~if so (\emph{line 10-11}).
Should the check fail  the algorithm recurses with $k$
increased (\emph{line 12}).

Removing the over-approximations used by the counterexample ensures progress (as we shall see in the example below) in the sense that the same counterexample does not arise again in the next iteration. 
This is because if the same trace arises again and does not use any over-approximations, then it must be a counterexample.  
In the worst case, all the solutions from the lower dimensions are removed.

Consider  an example program (linear for simplicity) shown below. 
 
\begin{lstlisting}
 c1. false:- X=0, p(X).              c2. false:- q(X).
 c3. p(X):- X>0.                     c4. q(X):- X=0.
\end{lstlisting}

 Suppose we have an approximate solution \texttt{$S=\{ p(X) \leftarrow \mathit{true}\}$} for the predicate $p(X)$. Using this solution, the above program is transformed into  the following  program.
 
  \begin{lstlisting}
 c1. false:- X=0, p(X).           c2. false:- q(X).
 c3. p(X):- true. 
 c4. q(X):- X=0.
 \end{lstlisting}

The trace \texttt{c1(c3)} is a counterexample for this transformed program  but not  for the original program (since it uses an approximate solution for the predicate $p$). However  the trace \texttt{c2(c4)} is a counterexample for this program as well as for the original  since it  does not use any approximate solution for the predicates appearing in the counterexample.
 
 \algnewcommand{\algorithmicgoto}{\textbf{go to}}%
\algnewcommand{\Goto}[1]{\algorithmicgoto~\ref{#1}}%
 \begin{algorithm}[t!]
\caption{SolveInc($P$, $k$, $S$)\label{alg:verify3}}
\begin{algorithmic}[1]
   \State \textbf{Input:} A set of CHCs $P$, an integer $k\geq 0$, and an interpretation $S$ (init $\emptyset$)
   \State \textbf{Output:} (\safe, \solution) $\mid$ (\unsafe, \counterexample) $\mid$ \stunknown
   \State $(\status, \mathit{R})  \gets \Safe(\Subst(P^{\atmost{k }},S))$\Comment{substitute syntactic model $S$ into $P^{\atmost{ }}$}\label{marker}
   \If {\status =\stunknown }
   \State \Return \stunknown
   \EndIf
   \If{\status =\unsafe }\Comment{$R$ is a counterexample for $\Subst(P^{\atmost{ }},S)$}
        \If{ \(\mathit{S}_{ | R} = \emptyset \)} \Comment{$R$ uses no predicate defined by $S$}
	\State \Return \( (\unsafe, \mathit{R})\) \Comment{ hence $R$ is a  counterexample for $P$}
        \EndIf
	\State \Return SolveInc($P$, $k$, $S \setminus \mathit{S}_{ | R}$) \Comment{recurse with the facts of $S$ not used in $R$}
    \EndIf
   \If {($\mathit{R}^{\uparrow}$  \text{is a  solution  of } $P$)} \Comment{$\Subst(P^{\atmost{k}},S)$ is safe}%
          \State \Return \((\safe,\mathit{R}^{ \uparrow})\) 
   \EndIf
   \State \Return SolveInc($P$, $k+1$, $R$) \Comment{$R^\uparrow$ does not solve $P$, recurse}
\end{algorithmic}
\end{algorithm}

 % end input /Users/jpg/Research/Papers/bounded-dimension-tplp-revised/algorithm-kdim.tex

%
% start input /Users/jpg/Research/Papers/bounded-dimension-tplp-revised/experiments_latest.tex
%
\section{Experimental results }
\label{experiments}
 
\subsection{Verification of safety properties}

\paragraph{\textbf{Implementation and experimental setting.}} Algorithms \ref{alg:verify} and    \ref{alg:verify3}   are implemented in Ciao Prolog \cite{DBLP:journals/tplp/HermenegildoBCLMMP12}, interfaced with the Parma Polyhedra Library \cite{DBLP:journals/scp/BagnaraHZ08} and the Yices 2.2  SMT solver \cite{Dutertre:cav2014} for  the  manipulation of constraints.  The  experiments are carried out on a set of 45 (36 safe and 9 unsafe) CHC verification problems taken from three sources:  the repository of NTS  benchmarks\footnote{\url{https://github.com/pierreganty/NTSLib/}}, the recursive category of SV-COMP\footnote{\url{http://sv-comp.sosy-lab.org/2015/benchmarks.php}} \cite{DBLP:conf/tacas/000115} and the benchmarks from the QARMC tool \cite{DBLP:conf/tacas/GrebenshchikovGLPR12}. 
Examples were chosen that  potentially have derivations of unbounded dimension (that is, they are  sets of non-linear clauses). 
Some of these benchmarks are first translated to Prolog syntax  using the tools ELDARICA\footnote{\url{https://github.com/uuverifiers/eldarica}} \cite{DBLP:conf/fm/HojjatKGIKR12} and SeaHorn \cite{DBLP:conf/tacas/GurfinkelKN15}. The benchmarks  are not beyond the capabilities of the existing Horn clause solvers, but they are typically used for testing the performance of new tools. The experimental evaluation is done on a MacBook Pro  running  OS X on  2.3 GHz Intel core i7 processor, 4 cores and 8 GB memory.   The results of these algorithms are compared with that of \rahft\ \cite{DBLP:conf/cav/KafleGM16},  a Horn clause verifier which refines an abstract interpretation by eliminating infeasible derivations.

 \paragraph{Implementation of $P^{\atmost{k}}$ and $P^{\exceeds{k}}$.} 
 For the experiments, we constructed the set of clauses $P^{\atmost{k}}$ and $P^{\exceeds{k}}$ using the procedures described in Section \ref{sec:pe}.

The experiments are intended to establish (i) whether the dimension-based decomposition is practical,  (ii) the relationship between the dimension and the solvability of a problem and (iii) how this approach compares other approaches.

\paragraph{\textbf{Discussion.} }The results are summarised in Table~\ref{tbl:exp}. We report results for three verification algorithms, namely Algorithm \ref{alg:verify} and Algorithm \ref{alg:verify3}, and the \Safe\ oracle that is used in those algorithms.
For Algorithm \ref{alg:verify}, we report the result returned, the dimension bound, and the time.  For the \Safe\ oracle we report the result and time, and for Algorithm \ref{alg:verify3} we return the dimension when a result is returned (if at all) and the time.

The oracle \Safe\ used in both algorithms is an abstract interpreter over the domain of convex polyhedra \cite{DBLP:conf/cav/KafleGM16}, which returns \unsafe\ if a feasible derivation of the predicate $\false$ exists,  \safe\ if a syntactic model can be found within a time bound, and \stunknown\ otherwise.

\begin{table}
 \caption{Experimental  results on 45 CHC verification problems with a timeout of 5 minutes. 
 Times are in seconds. \newline
 \label{tbl:exp}}
 \begin{minipage}{\textwidth}
    \begin{tabular}{|l|l|l|l|l|l|l|l|}  
     \cline{1-8}
                              &       &  \multicolumn{2}{c|}{\bf{Alg. \ref{alg:verify}}}          & \multicolumn{2}{c|}{\bf{\Safe\ oracle}}             &  \multicolumn{2}{c|}{\bf{Alg. \ref{alg:verify3}}}  
   \\ \cline{1-8}
    Program                    & safety & dim         &time&result & time & dim &time

     \\ \cline{1-8}
    Addition03\_false-unreach   & safe   & 0           & 3        & safe       & $< 1$  & ? & ? \\ \cline{1-8}
    McCarthy91\_false-unreach   & unsafe & 1           & 6        & unsafe        & $< 1$  & ?           & ?        \\ \cline{1-8}
    addition.nts.pl             		& safe   & 0           & 3        & safe        & $< 1$  & 1           & $<1$   \\ \cline{1-8}
    bfprt.nts.pl               		 & safe   & 0           & 5        & safe        & $< 1$  & 2           & 4        \\ \cline{1-8}
    binarysearch.nts.pl         	& safe   & 0           & 3        & safe        & $< 1$  & 1           & 1.1      \\ \cline{1-8}
    countZero.nts.pl           		 & safe   & 0           & 4        & safe        & $< 1$  & 1           & $< 1$  \\ \cline{1-8}
    eq.horn                     		& unsafe & 0           & 3        & unsafe        & $< 1$  & 2           & $< 1$  \\ \cline{1-8}
    fib.pl                      			& safe   & 1           & 6        & ?        & ? & ?           & ?        \\ \cline{1-8}
    identity.nts.pl             		& safe   & 0           & 4        & safe        & $< 1$  & 1           & $< 1$  \\ \cline{1-8}
    merge.nts.pl               		 & safe   & 0           & 5        & safe        & $< 1$  & 1           & 1.7      \\ \cline{1-8}
    palindrome.nts.pl           	& safe   & 0           & 3        & safe        & $< 1$  & 1           & $< 1$  \\ \cline{1-8}
    parity.nts.pl               		& unsafe & 0           & 3        & ?        & ?	  & ?           & ?        \\ \cline{1-8}
    remainder.nts.pl            		& unsafe & 0           & 3        & unsafe       & $< 1$  & 1           & $< 1$  \\ \cline{1-8}
    revlen.pl                   		& safe   & 0           & 3        & safe        & $< 1$  & 1           & $< 1$  \\ \cline{1-8}
    running.nts.pl              		& unsafe & 1           & 4       & ?        & ?  & ?           & ?        \\ \cline{1-8}
    sum\_10x0\_false-unreach    & unsafe & ?           & ?        & ?       & ?       & ?           & ?        \\ \cline{1-8}
    sum\_non\_eq\_false-unreach & unsafe & 0           & 3        & unsafe        & $< 1$  & ?           & ?        \\ \cline{1-8}
    suma1.horn                  		& unsafe & 0           & 3        & unsafe        & $< 1$  & 1           & $< 1$  \\ \cline{1-8}
    suma2.horn                  		& unsafe & 0           & 3       & unsafe        & $< 1$  & 2           & $< 1$  \\ \cline{1-8}
    summ\_SG1.r.horn            	& safe   & 0           & 2        & safe        & $< 1$  & ?           & ?        \\ \cline{1-8}
    summ\_SG2.r.horn            	& safe   & ?           & ?        & ?        & ?       & ?           & ?        \\ \cline{1-8}
    summ\_SG3.horn              	& safe   & 0           & 3        & safe        & $< 1$  & 1           & $< 1$  \\ \cline{1-8}
    summ\_b.horn                	& safe   & 2           & 12        & ?       & ?     & ?           & ?        \\ \cline{1-8}
    summ\_binsearch.horn        	& safe   & ?           & ?       & ?        & ?       & ?           & ?        \\ \cline{1-8}
    summ\_cil.casts.horn        	& safe   & 0           & 3        & safe        & $< 1$  & 1           & $< 1$  \\ \cline{1-8}
    summ\_formals.horn         	 & safe   & 0           & 4        & safe        & $< 1$  & 1           & $< 1$  \\ \cline{1-8}
    summ\_g.horn                	& safe   & 0           & 3        & safe        & $< 1$  & ?           & ?        \\ \cline{1-8}
    summ\_globals.horn          	& safe   & 0           & 3        & safe        & $< 1$  & 1           & $< 1$  \\ \cline{1-8}
    summ\_h.horn                	& safe   & 0           & 3        & safe        & $< 1$  & 2           & $< 1$  \\ \cline{1-8}
    summ\_local-ctx-call.horn   	& safe   & 0           & 2       & safe        & $< 1$  & 1           & $< 1$  \\ \cline{1-8}
    summ\_locals.horn           	& safe   & 0           & 4        & safe        & $< 1$  & ?           & ?        \\ \cline{1-8}
    summ\_locals2.horn          	& safe   & 0           & 2        & safe        & $< 1$  & 1           & $< 1$  \\ \cline{1-8}
    summ\_locals3.horn          	& safe   & 0           & 3        & safe        & $< 1$  & 1           & $< 1$  \\ \cline{1-8}
    summ\_locals4.horn          	& safe   & 0           & 3        & safe        & $< 1$  & 2           & 2.2      \\ \cline{1-8}
    summ\_mccarthy2.horn        & safe   & ?           & ?        & ?        & ?        & ?           & ?        \\ \cline{1-8}
    summ\_multi-call.horn       	& safe   & 0           & 3        & safe        & $< 1$  & 1           & $< 1$  \\ \cline{1-8}
    summ\_nested.horn           	& safe   & 0           & 3        & safe        & $< 1$  & 1           & $< 1$  \\ \cline{1-8}
    summ\_ptr\_assign.horn      	& safe   & 0           & 3        & safe        & $< 1$  & 1           & $< 1$  \\ \cline{1-8}
    summ\_recursive.horn        	& safe   & 0           & 3        & ?        & ?        & ?           & ?        \\ \cline{1-8}
    summ\_rholocal.horn         	& safe   & 0           & 3        & safe        & $< 1$  & 1           & $< 1$  \\ \cline{1-8}
    summ\_rholocal2.horn       	 & safe   & 0           & 3        & safe        & $< 1$  & 1           & $< 1$  \\ \cline{1-8}
    summ\_slicing.horn          	& safe   & 0           & 3        & safe        & $< 1$  & ?           & ?        \\ \cline{1-8}
    summ\_summs.horn            	& safe   & 0           & 3        & safe        & $< 1$  & ?           & ?        \\ \cline{1-8}
    summ\_typedef.horn          	& safe   & 0           & 4        & safe        & $< 1$  & 1           & $< 1$  \\ \cline{1-8}
    summ\_x.horn                	& safe   & 0           & 3        & safe        & $< 1$  & ?           & ?        \\  \cline{1-8}
 \cline{1-8}
        \#  solved (safe/unsafe)                     & ~      & \multicolumn{2}{c|}{43 (35/8)}         & \multicolumn{2}{c|}{36 (30/6)}    & \multicolumn{2}{c|}{27 (23/4)}  \\ \cline{1-8}
    \end{tabular}
    \vspace{-2\baselineskip}
 \end{minipage}
\end{table}

Firstly, the results show that implementation of decomposition based on tree dimension is practical.
Algorithm \ref{alg:verify} solves 43 out of 45 problems and Algorithm \ref{alg:verify3}  solves about 27 out of 45 problems.
There are 7 examples where Algorithm \ref{alg:verify} with dimension $k\le2$ was enough to prove safety but the \Safe\ oracle was not able to return \safe\ or \unsafe.  
That is, with a given oracle \Safe, there are examples for which \Safe\ returns unknown on the original clauses, but there is a low dimension (say $k=0$ or $k=1$) where \Safe\ returns safe on both $P^{\atmost{k}}$ and $P^{\exceeds{k}}$. This is evidence that decomposition by dimension is useful with respect to that particular oracle and is an effective refinement heuristic for these cases.   While in other refinement approaches, a spurious counterexample is the basis of refinement, Algorithms \ref{alg:verify} and \ref{alg:verify3} can be  viewed as performing refinement in which clauses are refined by eliminating safe derivations of lower dimensions, thereby removing a possibly infinite number of traces that have already been shown to be safe.  

Most of the problems solved using Algorithm \ref{alg:verify} are solved when they are decomposed with dimension  $k=0$. The separation of the derivations ($k=0$) eases the verification task. Only 4 problems that were solved needed decomposition greater than \(0\). Similarly, for Algorithm \ref{alg:verify3}, the solution of an under-approximation ($P^{\atmost{k}}$) for a fairly small value of $k=1 ~\mathrm{or}~ k=2$ was sufficient for finding a syntactic model for those problems that were solved. Though this observation may be related to this particular set of examples, we suspect that many application problems resulting from encoding imperative programs have derivation trees  of low dimension. 

\paragraph{Use of linearisation.}
As mentioned previously, $P^{\atmost{k}}$ can be linearised, potentially allowing the use of specialised verification procedures for linear clauses.  Although our implementation of the oracle \Safe\ contains no special facilities for dealing with linear clauses, we applied a linearisation procedure in Algorithm \ref{alg:verify3}. For this purpose we used a procedure based on partial evaluation \cite{DBLP:journals/corr/KafleGG16}.  We did not observe that linearisation in itself offers any advantages, although one might expect that linear clauses were in some way a simpler case for verification.  In order to exploit linearisation, it would be necessary to use a verification procedure with more specialised procedures for recognising and solving particular classes of linear recursive predicates amenable to precise solution, for example as described by  \citeN{DBLP:conf/sas/GonnordH06}.

\paragraph{Limitations of our experiments and possible improvements.} 
It would be possible to combine dimension-bounded decomposition with refinement-based solving.  For example, our oracle \Safe\ is limited in that it does not attempt any refinement after computing a convex polyhedra abstract interpretation of the clauses. It returns  \stunknown\ if the over-approximation  allows a derivation of $\false$, which might, however, be infeasible. Thus Algorithm \ref{alg:verify3} tends to return \stunknown\ before the timeout, in cases where a more sophisticated oracle would allow the procedure to continue.

 % end input /Users/jpg/Research/Papers/bounded-dimension-tplp-revised/experiments_latest.tex

%
% start input /Users/jpg/Research/Papers/bounded-dimension-tplp-revised/related.tex
%
\section{Discussion and Related Work}
\label{rel}

The notion of dimension of a tree has a long history in science (starting with
Geology) which has been detailed by 
\citeN{DBLP:conf/lata/EsparzaLS14}. However, the use of dimension for program verification is
more recent.  \citeN{Ganty2016} used the notion of
\emph{tree dimension} for computing summaries of procedural programs by
under-approximating them. Roughly speaking, they compute procedure summaries
iteratively, starting from the program behaviours captured by derivation trees
of dimension \(0\). Then they reuse these summaries to compute summaries for
program behaviours captured by derivation trees of dimension \(1\) and so on for
\(2\), \(3\), etc. We adapt the idea of dimension-based
under-approximations to the setting of CHCs.

Decomposition can be compared to refinement techniques based on automata \cite{DBLP:conf/sas/HeizmannHP09,DBLP:conf/cav/HeizmannHP13,kafleG2015Horn} in which the aim is to eliminate sets of program traces that have been shown to be safe.  In our case, establishing the safety of clauses whose derivations are of a given dimension allows us to eliminate those dimensions, and focus on the remaining dimensions.  Our decomposition technique offers a practical way to checking and eliminating infinite sets of traces.

In the world of constrained Horn clause verification tools (solvers)  we can
distinguish solvers depending on whether they can handle general non-linear Horn
clauses or not. A majority of solvers
\cite{DBLP:conf/tacas/GurfinkelKN15,DBLP:conf/tacas/GrebenshchikovGLPR12,DBLP:conf/cav/RummerHK13,McmillanR2013,kafleG2015Horn}
handle non-linear Horn clauses but there are notable exceptions like
VeriMAP~\cite{DBLP:conf/tacas/AngelisFPP14} or
Sally\footnote{\url{https://github.com/SRI-CSL/sally}}. For both VeriMAP and Sally, their
underlying reasoning engine handles only linear Horn clauses which appears to restrict,
in principle, their applicability. 
However, prior work on consistency preserving linearisation of dimension-bounded sets of CHCs \cite{DBLP:journals/corr/KafleGG16} shows solvers for linear CHCs can be used to check consistency of non-linear sets of CHCs.
Another work on linearisation of CHCs
based on \emph{fold-unfold transformations} is described by \citeN{DBLP:journals/tplp/AngelisFPP15}.
 % end input /Users/jpg/Research/Papers/bounded-dimension-tplp-revised/related.tex
 
%
% start input /Users/jpg/Research/Papers/bounded-dimension-tplp-revised/conclusion.tex
%
\section{Conclusion and Future Work}
\label{concl}

We applied   the notion of \emph{tree
dimension} to decompose  constrained Horn clause verification  problems by  dimensions.  We presented algorithms based on this idea; whose results  on a set of non-linear Horn clause verification benchmarks show its feasibility and usefulness both for proving safety  as well as for finding bugs in programs. We also looked into the problem of instrumenting clauses with dimension predicates and reason about the dimension directly from the resulting clauses.

Other ideas for program verification based on tree dimension are worth investigating, including induction based on tree dimension,  and further investigation of strategies that could exploit knowledge of dimension bounds (such as those discussed in Section \ref{sec:proginstr}).

 % end input /Users/jpg/Research/Papers/bounded-dimension-tplp-revised/conclusion.tex
 
\section*{Acknowledgements}
The research leading to these results has been supported by EU FP7 project 318337, ENTRA - Whole-Systems Energy
Transparency, EU FP7 project 611004, coordination and support action ICT-Energy, EU FP7 project 610686, POLCA - Programming
Large Scale Heterogeneous Infrastructures, Madrid Regional Government project S2013/ICE-2731, N-Greens Software
- Next-GeneRation Energy-EfficieNt Secure Software, and the Spanish Ministry of Economy and Competitiveness project
No. TIN2015-71819-P, RISCO - RIgorous analysis of Sophisticated COncurrent and distributed systems. The first author is supported by the Australian Research Council Discovery Project grant DP140102194.

We thank the anonymous reviewers for their comments and suggestions, which greatly improved the paper.

\bibliographystyle{acmtrans}

\end{document}